\documentclass[a4paper,fleqn]{cas-sc}

\usepackage{amsmath,amsfonts,amssymb}
\usepackage{algorithmic}
\usepackage{algorithm}
\usepackage{array}
\newcolumntype{P}[1]{>{\centering\arraybackslash}p{#1}}
\usepackage[caption=false,font=normalsize,labelfont=sf,textfont=sf]{subfig}
\usepackage{textcomp}
\usepackage{stfloats}
\usepackage{url}
\usepackage{verbatim}
\usepackage{graphicx}
\usepackage{booktabs}
\usepackage{multirow}
\usepackage{siunitx}
\usepackage[table]{xcolor}
\usepackage{color,soul}
\sethlcolor{yellow}
\usepackage{mathrsfs}
\usepackage{tikz}
\usepackage{pgfplots}
\pgfplotsset{compat=1.18}
\usetikzlibrary{arrows.meta,positioning,calc}
\usepackage{threeparttable}
\usepackage{makecell}
\usepackage{adjustbox}
\usepackage{float}

\usepackage[numbers]{natbib}

\def\BibTeX{{\rm B\kern-.05em{\sc i\kern-.025em b}\kern-.08em
T\kern-.1667em\lower.7ex\hbox{E}\kern-.125emX}}

\newtheorem{defka}{Definition}
\newtheorem{thm}{Theorem}
\newtheorem{assumption}{Assumption}

\newtheorem{lem}{Lemma}
\newtheorem{rmk}{Remark}

\newtheorem{pf}{Proof}

\begin{document}

\let\WriteBookmarks\relax
\def\floatpagefraction{1}
\def\textpagefraction{.001}


\title [mode = title] {Adaptive Modular Geometric Control of Robotic Manipulators}

\author[1]{Mahdi Hejrati} \cormark[1] 
\ead{mahdi.hejrati@tuni.fi} 

\author[1]{Amir Hossein Barjini}
\ead{amirhossein.barjini@tuni.fi}

\author[1]{Gokhan Alcan}
\ead{gokhan.alcan@tuni.fi}

\author[1]{Jouni Mattila}
\ead{jouni.mattila@tuni.fi}

\affiliation[1]{organization={Department of Engineering and Natural Sciences, Tampere University}, city={Tampere}, country={Finland}} \cortext[1]{Corresponding author: Mahdi Hejrati}

\begin{abstract}
This paper develops an adaptive modular geometric control framework for robotic manipulators with uncertain inertial parameters. The manipulator is decomposed into rigid-body and joint modules, where each rigid-body module is represented by an Euler--Poincaré-type spatial dynamics on the Lie algebra \(\mathfrak{se}(3)\), and configuration errors are defined intrinsically through logarithmic maps on \(SE(3)\). The joint modules impose local screw constraints that relate adjacent body twists, accelerations, and transmitted wrenches, yielding a recursive propagation structure for the interconnected multibody system. Within this formulation, local geometric control laws are constructed at the module level, while the interconnection among modules is characterized by power-conjugate twist--wrench pairs induced by the natural duality pairing between the Lie algebra \(\mathfrak{se}(3)\) and its dual space \(\mathfrak{se}(3)^*\). For the nominal case, exponential tracking stability of the interconnected system is established using local configuration energy functions on \(SE(3)\) and the power-preserving structure of the modular interconnection. To address inertial parametric uncertainty, a geometric adaptation law is introduced on the manifold of symmetric positive-definite matrices, ensuring physically consistent parameter estimates while retaining compatibility with the Lie-algebraic control formulation. Under the adaptive controller, semi-global uniform ultimate boundedness of the closed-loop tracking and parameter estimation errors is proven. Numerical simulations on a redundant high-inertia robotic manipulator demonstrate accurate pose tracking, smooth transient behavior, orientation regulation, and robustness under inertial perturbations. Comparative studies with state-of-the-art methods further illustrate the effectiveness of the proposed framework for complex robotic manipulation tasks.
\end{abstract}

\begin{keywords}
Modular control \sep Geometric control \sep Adaptive control \sep Robotic manipulators \sep Lie algebra
\end{keywords}

\maketitle

\section{INTRODUCTION}

Robotic manipulators constitute a fundamental class of interconnected multi-body systems, composed of rigid bodies coupled through kinematic joints. Owing to their structural versatility and ability to generate controlled motion and force in complex environments, they are widely employed in industrial automation, heavy-duty machinery, hydraulic excavators, cooperative and parallel robots, rehabilitation and assistive exoskeletons, healthcare technologies, logistics, and field robotics. In many of these applications, manipulators are required to operate under demanding conditions involving large inertial loads, strong dynamic coupling, redundancy, uncertainty, and stringent accuracy requirements. Consequently, the development of accurate, robust, and computationally efficient motion control strategies for robotic manipulators remains a central problem in robotics and control engineering.

A large body of literature has addressed the motion control of robotic manipulators through sophisticated methodologies, including adaptive control~\cite{slotine1987adaptive}, backstepping control~\cite{krstic1995nonlinear}, reinforcement-learning-based control~\cite{guo2026adaptive}, and model predictive control~\cite{huang2026switching}. Many of these formulations are expressed in joint coordinates, task-space coordinates, or other local Euclidean representations. While such approaches have been highly successful, robotic manipulators are not merely collections of joint variables; they are mechanical systems whose motion evolves on structured configuration spaces. In particular, the motion of each rigid body naturally belongs to the Special Euclidean group \(SE(3)\), where translational and rotational motions are coupled through the geometry of rigid-body displacement. Differential geometry therefore provides a natural language for modeling and controlling such systems, since it allows configuration errors, velocities, forces, and interconnections to be expressed in a coordinate-invariant manner \cite{bullo2005geometric}. This viewpoint is especially attractive for complex manipulators, where local coordinates or separate position--orientation representations may obscure the intrinsic structure of rigid-body motion and introduce representation-dependent complications.

Motivated by this perspective, geometric control has emerged as a powerful framework for systems evolving on Lie groups and differentiable manifolds. Early contributions established intrinsic control formulations for mechanical systems, including PD control on the Euclidean group~\cite{bullo_murray_1995}, tracking control for fully actuated mechanical systems~\cite{bullo1999tracking}, and systematic differential-geometric treatments of mechanical control systems~\cite{lewis1995aspects}. Building on these foundations, recent studies have developed geometric control designs directly on \(SE(3)\), with applications including quadrotor control~\cite{lee2010geometric,zhou2026semi} and geometric impedance control for robotic manipulators~\cite{seo2023geometric}. A key advantage of the geometric framework lies in its unified treatment of position and orientation errors, in contrast to conventional formulations that handle them separately. This intrinsic representation has been shown to improve tracking behavior and to enhance the success rate of learning-based manipulation methods~\cite{seo2023contact}. Beyond feedback control, geometric principles have also been employed in dynamic parameter identification~\cite{lee2018geometric,lee2019geometric}, physically consistent adaptive control~\cite{lee2018natural}, trajectory optimization on manifolds~\cite{alcan2025constrained}, equivariant robot learning~\cite{seo2025se}, geometry-aware movement primitives~\cite{abu2020geometry}, and stable skill learning on Riemannian manifolds~\cite{saveriano2023learning}. Collectively, these results demonstrate that differential geometric formulations provide both conceptual clarity and practical advantages for the analysis, control, and learning of robotic manipulators \cite{murray1997nonlinear}.

In parallel with the geometric viewpoint, modularity has become an important design principle for complex robotic systems. Rather than treating the manipulator as a monolithic dynamical system, modular control seeks to exploit its physical decomposition into interacting subsystems. This philosophy enables subsystem-level design, reduced computational burden, structural flexibility, and systematic stability analysis~\cite{nah2025modular}. Within the broader class of modular approaches, port-Hamiltonian modeling provide an important framework for representing complex physical systems in terms of energy storage, dissipation, and power-preserving interconnections~\cite{van2014port}. In this formulation, the interconnections of subsystems are described through energy variables and conjugate effort--flow pairs in a way that the exchanged power is explicitly accounted for at the system level. This makes the port-Hamiltonian framework particularly suitable for mechanical systems, where stability is naturally linked to energy balance and passivity properties. Accordingly, port-Hamiltonian-based control approaches typically exploit passivity as the main stability mechanism, using the Hamiltonian or a shaped energy function as the storage function in the closed-loop analysis~\cite{kumar2023tracking}. This viewpoint has led to structured formulations for rigid-body systems~\cite{rashad2022energy} and chains of interconnected rigid bodies~\cite{dirksz2011port}, where the modular interconnection structure is explicitly retained. These developments show that, for complex robotic systems, preserving the physical interconnection structure is not merely a modeling preference, but a control-theoretic mechanism through which stability can be systematically established.

In a closely related spirit to port-Hamiltonian approach, virtual decomposition control (VDC) ~\cite{zhu2010virtual} provides a modular control framework tailored to robotic manipulators and complex multi-body systems. Instead of interconnecting subsystems through Hamiltonian effort--flow ports, VDC decomposes the robot into dynamically consistent modules coupled through virtual power flows. These virtual power flows represent the power exchanged among adjacent modules and cancel when the decomposed subsystems are recomposed into the original robot, thereby providing a power-preserving mechanism for system-level stability~\cite{hejrati2025desired}. VDC has demonstrated high-performance control in modular manipulators~\cite{zhu2011virtual}, heavy-duty hydraulic systems~\cite{hejrati2025impact}, electric and lightweight manipulators~\cite{barjini2025surrogate,ding2022vdc}, and immersive bilateral teleoperation~\cite{hejrati2025robust}, with further theoretical extensions to strict-feedback nonlinear systems~\cite{koivumaki2022subsystem}.

More recently, modularity has also been revisited from a motor-primitives perspective, inspired by biological motor control~\cite{nah2025modular}. In this view, complex robotic behaviors are constructed from a small set of elementary motion and interaction modules, such as virtual trajectories and mechanical impedances~\cite{nah2025modular}. A central objective of this paradigm is to ensure that modules can be independently specified and combined while preserving closed-loop stability, particularly in contact-rich tasks. By exploiting the superposition of virtual trajectories and mechanical impedances, such approaches provide a constructive mechanism for composing task-space position, orientation, and joint-space behaviors without relying on inverse kinematics, while passivity is used to maintain stable physical interaction. This recent direction further reinforces the importance of modularity as a design principle for robotic control, especially when complex behaviors must be generated from simpler, physically interpretable building blocks.

Although the aforementioned studies have achieved significant progress in robotic manipulator control, several aspects remain to be further integrated within a unified framework. First, many existing formulations are developed in joint coordinates or task-space coordinates. While these descriptions are effective and widely used, they do not always reflect the intrinsic differential-geometric nature of rigid-body motion, where each body evolves on a nonlinear configuration space. Second, modularity is often introduced at the level of large interacting subsystems, for example by considering the manipulator as one module coupled to an environment, an object, or another robotic system. This viewpoint is valuable for system-level composition, but it does not fully exploit the internal modular structure of the manipulator itself. A further issue concerns the relation between geometric adaptation and control design. In many cases, the adaptation mechanism is constructed to preserve the physical consistency of the estimated inertial parameters, whereas the tracking errors and control actions are formulated in the Euclidean space. As a result, the adaptation law and the feedback control law may not be expressed within the underlying natural geometric structure. Moreover, the stability guarantees of modular and energy-based formulations are frequently established through passivity, \(L_2/L_\infty\) boundedness, or related input--output arguments. These tools are powerful and physically meaningful; however, for high-performance tracking, it is also desirable to obtain explicit exponential convergence guarantees, at least in the nominal case where unknown uncertainties are absent.

These observations motivate the development of a unified adaptive modular geometric control framework. The objective is to formulate both the control law and the inertial adaptation law in geometrically consistent spaces, while extending the notion of modularity from the system level to the internal structure of the manipulator. In this framework, rigid bodies and joints are treated as the constitutive modules of the robot, and the same modular viewpoint can be extended to additional interacting components, such as external objects, environments, or other robots. Such a formulation allows the intrinsic geometry of rigid-body motion, the interconnection structure of the manipulator, the power exchanged among modules, and the adaptation of uncertain inertial parameters to be handled coherently, while ensuring exponential tracking convergence in the absence of unknown uncertainties.

The main contributions of this paper are summarized as follows.
\begin{itemize}
\item A modular geometric control formulation is developed for robotic manipulators by decomposing the system into rigid-body and joint modules. Each rigid-body module is modeled using an Euler--Poincaré-type spatial representation on the Lie algebra, while logarithmic configuration errors are defined intrinsically on \(SE(3)\). Joint screw axes impose the local kinematic constraints, and recursive propagation of body twists, accelerations, and wrenches ensures consistency across the interconnected modules. This yields a link-level geometric control structure that preserves the intrinsic rigid-body geometry while remaining compatible with the physical interconnection of the manipulator.
\item Exponential tracking stability of the complete interconnected manipulator is established for the proposed modular geometric controller. The proof exploits local configuration energy functions on \(SE(3)\), body velocity errors, joint-module dynamics, and virtual power flows associated with the decomposed subsystems. The joint screw coordinates provide both the kinematic constraints between adjacent rigid bodies and the power-conjugate relation between joint velocities and transmitted wrenches, allowing the module-level stability properties to be lifted to the full manipulator.
\item The framework is extended to manipulators with uncertain inertial parameters through a geometric adaptation law defined on the manifold of symmetric positive-definite matrices. The proposed adaptation mechanism preserves physical consistency of the estimated inertial parameters and avoids drift from the admissible parameter set. By incorporating this law into the modular geometric controller, the adaptive closed-loop system retains the underlying natural geometric structure, and semi-global uniform ultimate boundedness is established under parametric uncertainty.
\item The proposed control framework is implemented and evaluated on a redundant high-inertia robotic manipulator. The simulation study examines tracking accuracy, transient response, orientation regulation, control effort, and robustness with respect to inertial uncertainty. Comparisons with representative control approaches are provided to illustrate the practical relevance of the theoretical development in a challenging manipulation scenario.
\end{itemize}

The remainder of the paper is organized as follows. Section \ref{Mathematical Preliminaries} reviews the geometric preliminaries on \(SE(3)\) and the modular decomposition framework. Section \ref{Modular Geometric Control} develops the modular geometric controller, and Section \ref{Adaptive Modular Geometric Control} introduces the adaptive extension. Section \ref{Results} presents simulation results, followed by concluding remarks in Section \ref{Conclusion}.

\section{Mathematical Preliminaries}
\label{Mathematical Preliminaries}
\subsection{Notation and Geometric Preliminaries}

We denote by $\mathbb{R}^n$ the $n$-dimensional Euclidean space with the standard inner product.
The special orthogonal (SO(3)) group and special Euclidean \(SE(3)\) group are defined as,
\begin{equation}
    SO(3) := \{ R \in \mathbb{R}^{3\times 3} \mid R^\top R = I,\ \det(R)=1\},
\end{equation}
\begin{equation}
    SE(3) := \left\{
        \mathcal{T} =
        \begin{bmatrix}
            R & p\\
            0 & 1
        \end{bmatrix}
        \,\middle|\,
        R \in SO(3),\ p \in \mathbb{R}^3
    \right\}.
\end{equation}
with \((\cdot)^\top\) being the transpose operator. The Lie algebra of $SE(3)$ is denoted by $\mathfrak{se}(3)$ and is given by,
\begin{equation}
    \mathfrak{se}(3) := \left\{
        \hat{\xi} =
        \begin{bmatrix}
            \hat{\omega} & v\\
            0 & 0
        \end{bmatrix}
        \,\middle|\,
        \omega,v \in \mathbb{R}^3
    \right\},
\end{equation}
where $(\cdot)^\wedge : \mathbb{R}^6 \to \mathfrak{se}(3)$ is the standard skew-symmetric “hat” operator with its inverse map denoted by $(\cdot)^\vee$ so that $\omega = \hat{\omega}^\vee$. The Lie bracket of two elements \(\hat{V}_1 \in \mathfrak{se}(3)\) and \(\hat{V}_2 \in \mathfrak{se}(3)\) is defined as \([\hat{V}_1, \hat{V}_2] = \hat{V}_1\,\hat{V}_2-\hat{V}_2\,\hat{V}_1\). The dual space of \(\mathfrak{se}(3)\), denoted as \(\mathfrak{se}^*(3)\), is the space of the wrenches applied to a rigid body with the pair of \((\tau,f)\). The Lie group exponential map, $\exp : \mathfrak{se}(3) \to SE(3)$,  provides a smooth mapping from the Lie algebra to the group, with local inverse given by the logarithm map, $\log : SE(3) \to \mathfrak{se}(3)$.

We use $\mathscr{P}(4)$ to denote the manifold of symmetric positive definite $4\times 4$ matrices,
\begin{equation}
    \mathscr{P}(4) := \{ L \in \mathbb{R}^{4\times 4} \mid L^\top = L,\ L \succ 0\}.
\end{equation}
Since $\mathscr{P}(4)$ is an open subset of the vector space of symmetric matrices,
the tangent space at each $L \in \mathscr{P}(4)$ can be identified with
$T_L\mathscr{P}(4)\cong \mathrm{Sym}(4) := \{ X \in \mathbb{R}^{4\times 4} \mid X^\top = X\}$.

\begin{defka}\label{def:metrics}
Let \(\mathbb{G}_{(\mathscr{Q},\mathscr{W})}\) denote an inner product defined on a space \(\mathscr{Q}\), with \(\mathscr{W}\succ 0\) representing either a weighting matrix or, when appropriate, a base point on \(\mathscr{Q}\). In particular, we consider the following cases:
\begin{equation}
    \mathbb{G}_{(\mathbb{R}^n,W)}(x,y)
    =
    x^\top W y,
\end{equation}
\begin{equation}\label{eq:inner_product_se3}
    \mathbb{G}_{(\mathfrak{se}(3),K)}(\hat{V}_1,\hat{V}_2)
    =
    V_1^\top K V_2,
\end{equation}
\begin{equation}
    \mathbb{G}_{(T_L\mathscr{P}(4),L)}(X,Y)
    =
    \frac{1}{2}\,\mathrm{tr}\!\left(L^{-1} X L^{-1} Y\right),
    \label{Metric}
\end{equation}
for \(x,y\in\mathbb{R}^n\), \(\hat{V}_1,\hat{V}_2\in\mathfrak{se}(3)\), and \(X,Y\in T_L\mathscr{P}(4)\). Then, the associated norm can be defined by,
\begin{equation}
    \|z\|_{(\mathscr{Q},\mathscr{W})}^2
    :=
    \mathbb{G}_{(\mathscr{Q},\mathscr{W})}(z,z),
\end{equation}
where the subscript indicates the space and weighting (or base point) with respect to which the norm is defined.
\end{defka}

The affine-invariant Riemannian metric on \(\mathscr{P}(4)\) is invariant under the congruence action
$\mathcal{A}*L = \mathcal{A}L\mathcal{A}^\top$,
for $\mathcal{A}\in GL(4)$ being a non-singular matrix. Accordingly, for \(X,Y\in T_L\mathscr{P}(4)\), the tangent vectors are transformed as
$
X \mapsto \mathcal{A}X\mathcal{A}^\top,
$
$
Y \mapsto \mathcal{A}Y\mathcal{A}^\top,
$
and the metric satisfies $
\mathbb{G}_{(T_L\mathscr{P}(4),\mathcal{A}L\mathcal{A}^\top)}
\left(\mathcal{A}X\mathcal{A}^\top,\mathcal{A}Y\mathcal{A}^\top\right)
=
\mathbb{G}_{(T_L\mathscr{P}(4),L)}(X,Y).
$

Additionally, an inner product
\(\mathbb{G}_{(\mathfrak{se}(3),K)}\) defined on the Lie algebra
\(\mathfrak{se}(3)\) induces a left-invariant Riemannian metric on
\(SE(3)\). Let \(l_{\mathscr{T}}:SE(3)\rightarrow SE(3)\) denote left
multiplication by \(\mathscr{T}\in SE(3)\), and let
\((l_{\mathscr{T}})_*\) be its pushforward. For tangent vectors
\(\mathscr{T}\hat V_1,\mathscr{T}\hat V_2\in T_{\mathscr{T}}SE(3)\),
with \(\hat V_1,\hat V_2\in\mathfrak{se}(3)\), the left-invariant metric is
defined by pulling them back to the identity as
\begin{equation}
\begin{split}
    \mathbb{G}_{(T_{\mathscr{T}}SE(3),K)}
    \left(\mathscr{T}\hat V_1,\mathscr{T}\hat V_2\right)
    &=
    \mathbb{G}_{(\mathfrak{se}(3),K)}
    \left((l_{\mathscr{T}^{-1}})_*(\mathscr{T}\hat V_1),
          (l_{\mathscr{T}^{-1}})_*(\mathscr{T}\hat V_2)\right)  \\
    &=
    \mathbb{G}_{(\mathfrak{se}(3),K)}
    \left(\hat V_1,\hat V_2\right).
\end{split}
\label{left_invar}
\end{equation}

\subsection{Manipulator Kinematics}
For the \(n\)-DoF interconnected rigid bodies shown in Fig. \ref{fig:interconnected}, \(\mathcal{P}\in SE(3)\) and \(\mathcal{P}^d\in SE(3)\) represent the end-effector actual and desired pose. Such system can be decomposed into rigid body and joint subsystems, as shown in Fig. \ref{fig:interconnected}.
\newpage
\begin{assumption}
For every admissible actual and desired profile in \(\mathcal P\) and \(\mathcal P^d\), there exist corresponding collections of local body transformations \(\mathcal T=[\mathcal T_i]\) and \(\mathcal T^d=[\mathcal T_i^d]\), with \(\mathcal T_i,\mathcal T_i^d\in SE(3)\), that are consistent with the underlying kinematic constraints of the decomposed system. Moreover, on the domain of interest, such correspondences are assumed to be represented by locally well-defined mappings \(\mathscr K:\mathcal P\to\mathcal T\) and \(\mathscr K^d:\mathcal P^d\to\mathcal T^d\).
\end{assumption}
\begin{rmk}
Closed-form expressions for the mappings $\mathscr K$ and $\mathscr K^d$ are not required. They may be obtained analytically when possible, or numerically through inverse-kinematic or optimization-based procedures. In the present development, only the existence of a consistent realization on the domain of interest is required.
\end{rmk}

We desire to derive the local homogeneous matrices with respect to the preceding bodies as,
\begin{equation}
    \begin{split}
        \mathcal{T}_{i-1,i}(\theta_i) = \mathcal{T}_{i-1,i}(0)e^{\hat{\xi}_i\theta_i},
    \end{split}
    \label{Ti}
\end{equation}
where \(\hat{\xi}_i \in \mathfrak{se}(3)\) is the constant twist of the body frame and \(\mathcal{T}_{i-1,i}\) indicates the transformation matrix of frame \{i\} with respect to \{i-1\}, with \(\mathcal{T}_{i-1,i}(0)\) being the initial configuration between the frames. Additionally, \(\theta \in \mathbb{S}^n\) and \(\dot{\theta} \in T_\theta\mathbb{S}^n\) with \(\mathbb{S}^n = \mathbb{S}^1 \times...\times \mathbb{S}^1\) is the joint generalized coordinate. We can derive \(\mathcal{T}^d_i\) in the sense of (\ref{Ti}) as,
\begin{equation}
    \begin{split}
        \mathcal{T}^d_{i-1,i}(\theta_i^d) = \mathcal{T}_{i-1,i}(0) e^{\hat{\xi}_i\theta^d_i}.
    \end{split}
    \label{Tid}
\end{equation}
In this formulation, the motion of the individual rigid bodies is preserved on \(SE(3)\) rather than being reduced to joint-space coordinates. The local transformation matrices expressed in the inertial frame can be written as \(\mathcal{T}_j = \prod_{i=1}^{j} \mathcal{T}_{i-1,i}(\theta_i)\), where the product follows the recursive chain structure of the mechanism.

Let $\mathcal{T}_i, \mathcal{T}_i^d \in \mathrm{SE}(3)$ denote the actual and desired poses of rigid body $i$, then the local configuration error can be defined as,
\begin{equation}
    e_i = (\mathcal{T}^d_{i})^{-1}\,\mathcal{T}_i = 
    \begin{bmatrix}
        (R^d_{i})^\top R_i & (R^d_{i})^\top (p_i - p^d_{i}) \\
        0 & 1
    \end{bmatrix}.
    \label{e}
\end{equation}
The associated Lie-algebraic local configuration error is represented as,
\begin{equation}
\eta_i := \log(e_i)^\vee \in \mathbb{R}^6.
\label{eq:log_error}
\end{equation}
\begin{figure*}
    \centering
    \includegraphics[width=0.9\linewidth]{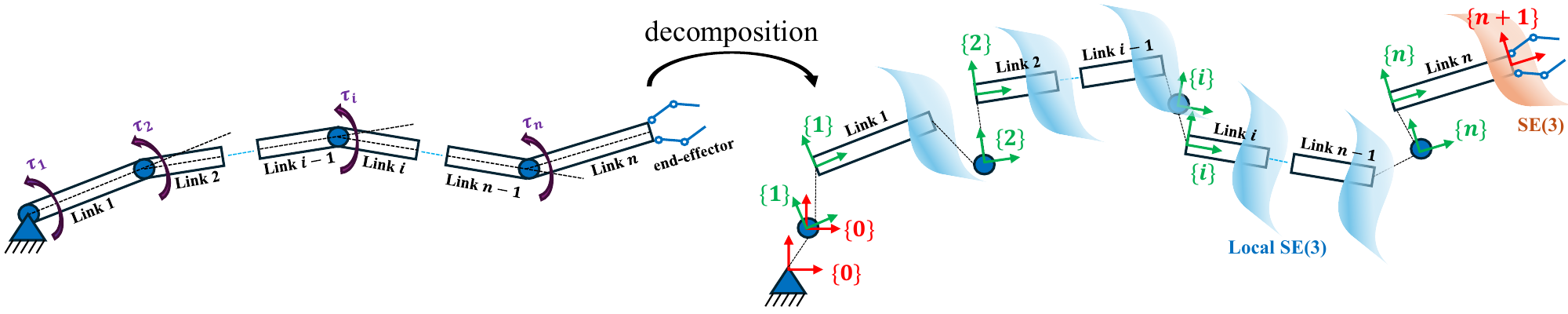}
    \caption{Interconnected multi-rigid body system decomposed into modules of rigid body and joints.}
    \label{fig:interconnected}
\end{figure*}
\begin{defka}
\label{Def log}
    Given $K_{z_i} \succ 0$, the local configuration energy error of body $i$ is defined in the sense of (\ref{eq:inner_product_se3}) as,
\begin{equation}
\Psi_i(e_i) = \frac{1}{2} \mathbb{G}_{(\mathfrak{se}(3),K_{zi})}(\hat{\eta},\hat \eta) = \frac{1}{2} \eta_i^\top K_{z_i} \eta_i .
\label{eq:configuration_energy}
\end{equation}
Since $K_{z_i}$ is positive definite, $\Psi_i$ is locally positive definite with respect to the identity $e_i = I$ and provides a quadratic measure of configuration deviation within the injectivity domain of the logarithm map.
\end{defka}

For each decomposed rigid body floating in space, we can write,
\begin{equation}
    \Dot{\mathcal{T}}_{i} = \mathcal{T}_{i} \hat{\mathscr{V}}_i,
    \label{Vi_O}
\end{equation}
with \(\hat{\mathscr{V}}_i = (\hat{w}_i, v_i) \in \mathfrak{se}(3)\) is the body-fixed velocity expressed in the body frame with respect to the spatial frame and \(\Dot{\mathcal{T}}_{i} \in T_{{\mathcal{T}}_{i}}SE(3)\) is the tangent vector of \({\mathcal{T}}_{i}\). In order to keep the modularity of the control framework through recursive approach, we extend (\ref{Vi_O}) to,
\begin{equation}
    {\mathscr{V}}_i = Ad_{\mathcal{T}^{-1}_{i-1,i}}\,{\mathscr{V}}_{i-1} + {\mathscr{V}}_{i-1,i},
\end{equation}
where by using \(\hat{\mathscr{V}}_{i-1,i} = \mathcal{T}^{-1}_{i-1,i} \Dot{\mathcal{T}}_{i-1,i}\) and replacing from (\ref{Ti}), one can establish,
\begin{equation}
    {\mathscr{V}}_i = Ad_{\mathcal{T}^{-1}_{i-1,i}}\,{\mathscr{V}}_{i-1} + \xi_i\, \Dot{\theta}_i,
    \label{Vi}
\end{equation}
where \(Ad_\mathcal{T}\) is the Adjoint map. As a result, (\ref{Vi}) represents the body-fixed velocity of each local rigid body in terms of adjacent bodies. By taking the time-derivative of (\ref{Vi}), one can establish the analytic representation of body-fixed acceleration,
\begin{equation}
    \mathscr{A}_i = Ad_{\mathcal{T}^{-1}_{i-1,i}}\,\mathscr{A}_{i-1}-\,[\xi_i\, \Dot{\theta}_i,Ad_{\mathcal{T}^{-1}_{i-1,i}}\mathscr{V}_{i-1}] + \xi_i\, \Ddot{\theta}_i.
    \label{Ai}
\end{equation}
The desired terms $\mathscr{V}^d_i$ and $\mathscr{A}^d_i$ are computed by replacing $\theta_i$ and $\dot\theta_i$ with $\theta_i^d$ and $\dot\theta_i^d$ in \eqref{Vi} and \eqref{Ai}, respectively.

\subsection{Manipulator Dynamics}
Given the geometric representation of the modular system kinematics, the dynamics of each local subsystem can be expressed in terms of spatial wrenches. Consider a single decomposed rigid body in space. The net wrench acting on body $i$ is given by,
\begin{equation}
    \mathscr{F}_i = F_i - Ad_{\mathcal{T}^{-1}_{i,i+1}}^{\top} F_{i+1},
\end{equation}
where $F_i \in \mathfrak{se}^*(3)$ and $F_{i+1} \in \mathfrak{se}^*(3)$ denote the body-frame representations of the wrenches applied at the origins of frames $\{i\}$ and $\{i+1\}$, respectively. On the other hand, the net wrench of rigid body $i$ can be written using the Euler--Poincaré equation as \cite{marsden1999introduction},
\begin{equation}
    \mathscr{F}_i = M_i \mathscr{A}_i - ad_{\mathscr{V}_i}^\top \left( M_i \mathscr{V}_i \right),
\end{equation}
where $M_i \in \mathbb{R}^{6\times6}$ is the spatial inertia matrix and $ad_{\mathscr{V}_i}^\top$ denotes the co-adjoint operator. Consequently, the dynamics of the rigid body module can be represented as,
\begin{equation}
    F_i = M_i \mathscr{A}_i - ad_{\mathscr{V}_i}^\top \left( M_i \mathscr{V}_i \right) + Ad_{\mathcal{T}^{-1}_{i,i+1}}^{\top} F_{i+1}.
    \label{Fi}
\end{equation}

\subsection{Stability Criterion of the Modular system}

So far, the modular geometric representation of the kinematics and dynamics has been derived for the decomposed interconnected multi-body system, which is consistent with the spatial rigid-body formulations in~\cite{murray2017mathematical, featherstone2008rigid, park1995lie}. In this section, we introduce the stability criterion used for each module. The main idea is that the interconnection between adjacent modules is represented through power-conjugate twist--wrench variables, so that the internal power exchanges cancel when the decomposed modules are recomposed into the complete system.

The virtual power flow can be motivated from the principle of virtual work. For a rigid body evolving on \(SE(3)\), an admissible virtual displacement at configuration \(\mathscr{T}_i\in SE(3)\) can be written as
$$
\delta \mathscr{T}_i
=
\mathscr{T}_i{\delta x}_i^{\wedge},
\qquad
\delta x_i\in \mathbb{R}^6\simeq\mathfrak{se}(3),
$$
where \(\delta x_i\) denotes the left-trivialized virtual displacement. A wrench \(F_i\in\mathfrak{se}^*(3)\) acts as a one-form on this virtual displacement, and the corresponding virtual work is given by the natural duality pairing
$$
\delta W_i
=
\langle F_i,\delta x_i\rangle ,
$$
where \(\langle\cdot,\cdot\rangle\) denotes the duality pairing between \(\mathfrak{se}^*(3)\) and \(\mathfrak{se}(3)\). Along a motion, the infinitesimal displacement is generated by the body twist \(\mathscr{V}_i\in\mathbb{R}^6\simeq\mathfrak{se}(3)\), so that \(\delta x_i=\mathscr{V}_i\,dt\). Hence, the mechanical work performed over a time interval is
$$
W_i
=
\int_{t_0}^{t_1}
\langle F_i,\mathscr{V}_i\rangle\,dt,
$$
and the integrand \(\langle F_i,\mathscr{V}_i\rangle\) represents the instantaneous mechanical power.

For a module in the decomposed multi-body system, let \((F_i^r,\mathscr{V}_i^r)\) denote the required wrench and body twist of the \(i\)-th module, and let \((F_i,\mathscr{V}_i)\) denote the corresponding actual quantities. The deviation of the actual module from its required behavior is characterized by the virtual effort and virtual flow variables
$$
\widetilde F_i = F_i^r-F_i,
\qquad
\widetilde{\mathscr{V}}_i
=
\mathscr{V}_i^r-\mathscr{V}_i .
$$
Here, \(\widetilde F_i\) represents the wrench mismatch associated with enforcing the required module behavior, whereas \(\widetilde{\mathscr{V}}_i\,dt\) represents the infinitesimal virtual displacement associated with the velocity mismatch. Therefore, by the same virtual-work principle, the differential virtual work associated with this error port is
$$
dW_i^{v}
=
\left\langle
\widetilde F_i,
\widetilde{\mathscr{V}}_i
\right\rangle dt .
$$
Consequently, the corresponding instantaneous rate of \textit{virtual} work is
\begin{equation}
p_i
=
\frac{dW_i^{v}}{dt}
=
\left\langle
\widetilde F_i,
\widetilde{\mathscr{V}}_i
\right\rangle
=
\left\langle
F_i^r-F_i,
\mathscr{V}_i^r-\mathscr{V}_i
\right\rangle .
\label{eq:vpf}
\end{equation}
This quantity is referred to as the \textit{virtual} power flow of the \(i\)-th module. It is not the physical power applied to the rigid body, which is \(\langle F_i,\mathscr{V}_i\rangle\), but rather the work rate associated with the virtual effort--flow pair measuring the deviation between the required and actual module behavior. When adjacent modules are interconnected, the internal virtual power flows enter the corresponding module inequalities with opposite signs. Therefore, in a properly composed modular system, these internal terms cancel when the module-level stability inequalities are summed, leaving only boundary power flows, dissipative terms, and uncertainty-dependent residual terms.

\begin{defka}[Dual map]
Let \(U\) and \(V\) be finite-dimensional vector spaces, and let
\(A:U\rightarrow V\) be a linear map. The dual map of \(A\), denoted by
\(A^*:V^*\rightarrow U^*\), is defined by
\begin{equation}
    \left\langle A^*\alpha,u\right\rangle
    =
    \left\langle \alpha,Au\right\rangle,
    \qquad
    \alpha\in V^*,\quad u\in U,
\end{equation}
where \(\langle\cdot,\cdot\rangle\) denotes the natural duality pairing.
\label{def:dual_map}
\end{defka}
In the present paper, the convention in Definition \ref{def:dual_map} is mainly applied to the previously defined maps \(\operatorname{Ad}_{\mathscr{T}}\) and \(\operatorname{ad}_{\mathscr{V}}\). Since twists and wrenches are represented by vectors in \(\mathbb{R}^6\) and paired through the standard power pairing, the corresponding dual maps are represented by matrix transposes, i.e.,
\begin{equation}
    \operatorname{Ad}_{\mathscr{T}}^*
    =
    \operatorname{Ad}_{\mathscr{T}}^\top,
    \qquad
    \operatorname{ad}_{\mathscr{V}}^*
    =
    \operatorname{ad}_{\mathscr{V}}^\top .
\end{equation}
This notation allows linear transformations to be transferred between twist and wrench variables while preserving the associated power pairing.

\begin{defka}
Let \(\mathrm{V}_i:\mathbb{R}^{n_i}\rightarrow\mathbb{R}_{\geq 0}\) be a continuously differentiable function satisfying
\begin{equation}
\underline{\alpha}_i\|x_i\|^2
\leq
\mathrm{V}_i(x_i)
\leq
\overline{\alpha}_i\|x_i\|^2,
\end{equation}
for all \(x_i\in\mathbb{R}^{n_i}\), where \(\underline{\alpha}_i,\overline{\alpha}_i>0\). Suppose that, along the trajectories of the \(i\)-th subsystem, the time derivative of \(\mathrm{V}_i\) satisfies
\begin{equation}
\dot{\mathrm{V}}_i(x_i,t)
\leq
-\widetilde{\alpha}_i\|x_i\|^2
+
p_i-p_{i+1}
+
\bar{\epsilon}_i,
\end{equation}
where \(\widetilde{\alpha}_i>0\), \(\bar{\epsilon}_i\geq 0\), and \(p_i,p_{i+1}\in\mathbb{R}\) denote the virtual power flows through the interconnection ports of the subsystem. Then, the \(i\)-th subsystem is said to be virtually stable.
\label{VS}
\end{defka}

\begin{rmk}
Definition~\ref{VS} shows that the stability of an individual module is expressed in terms of its local dissipation and the virtual power flows exchanged with neighboring modules. When the modules are interconnected, the internal virtual power flows cancel in the sum of the module-level inequalities. Hence, if \(\bar{\epsilon}_i=0\), the resulting inequality can be used to establish exponential stability of the interconnected nominal system. If \(\bar{\epsilon}_i>0\), the same structure leads to an ultimate boundedness result, with the residual bound determined by the uncertainty-dependent terms.
\end{rmk}

\section{Modular Geometric Control}
\label{Modular Geometric Control}
In this section we will elaborate on the design of modular geometric control method for an interconnected rigid body system in the absence of uncertainty. While the local configuration error \(\eta_i\) is defined in (\ref{eq:log_error}), extending this notion to the body velocity error in the geometric approach is less straightforward. From (\ref{Vi_O}) we can see that \(\Dot{\mathcal{T}}_{i} \in T_{{\mathcal{T}}_{i}}SE(3)\) while \(\Dot{\mathcal{T}}^d_{i} \in T_{{\mathcal{T}}^d_{i}}SE(3)\), which shows that actual and desired tangent vectors do not belong to the same tangent space. This implies that to compute the velocity error we cannot simply compare them as if they belong to the same space. In contrast, we will define the body velocity error as,
\begin{equation}
    \mathscr{V}^e_{i} = Ad_{e_i^{-1}}\mathscr{V}^d_{i} - \mathscr{V}_i,
    \label{Ve}
\end{equation}
where \(Ad_{e_i^{-1}}\) transports the desired velocity to the space where the \(\mathscr{V}_i\) resides. In the same sense, the required body velocity \(\mathscr{V}^r_{i} \in R^6\) can be defined as,
\begin{equation}
    \mathscr{V}^r_{i} = Ad_{e_i^{-1}}\mathscr{V}^d_{i} -\Gamma_i \eta_i,
    \label{Vri}
\end{equation}
with \(\Gamma_i\) being positive definite gain. Defining the required body-fixed velocity as a logarithmic function of local natural error allows smoother convergence to the origin \cite{bullo_murray_1995}. The definition of \(\mathscr{V}^r_{i}\) in the sense of (\ref{Vri}) enables the geometric motion of each single rigid bodies on the manifold instead of just considering the end-effector. The required body acceleration, on the other hand, can be computed by taking the time derivative of (\ref{Vri}) as,
\begin{equation}
    \mathscr{A}^r_{i} = Ad_{e_i^{-1}}\mathscr{A}^d_{i} +\,[\mathscr{V}^e_{i},Ad_{e_i^{-1}}\mathscr{V}^d_{i}] - \Gamma_i\, \dot \eta_i. 
\end{equation}
Now the required wrench of the \(i^{th}\) body can be expressed by replacing \(\mathscr{V}_{i}\) and \(\mathscr{A}_{i}\) in (\ref{Fi}) with \(\mathscr{V}^r_{i}\) and \(\mathscr{A}^r_{i}\), respectively,
\begin{equation}
    \begin{split}
        F^r_i = M_i\,\mathscr{A}^r_i- ad^\top_{\mathscr{V}_i}(M_i\, \mathscr{V}^r_i) &+ K_v(\mathscr{V}^r_{i}-\mathscr{V}_{i})+ Ad_{\mathcal{T}^{-1}_{i,i+1}}^\top\, F^r_{{i+1}} .
    \end{split}
    \label{Fri}
\end{equation}
The last term in~(\ref{Fri}) propagates the required wrench from the successive bodies. Finally, the control input applied at the joints is computed as,
\begin{equation}
    \tau^r_{i} = \xi_i^\top F^r_{i} + \mathscr{J}^r_{i},
    \label{tau_i}
\end{equation}
where $\mathscr{J}^r_i$ denotes the contribution arising from the joint dynamics. The required torque in~(\ref{tau_i}) results from the composition of the decomposed rigid-body and joint subsystems. The first term on the right-hand side accounts for the rigid-body motion on \(SE(3)\), where the screw axis $\xi_i$ enforces the joint constraint and maps the required wrench to the corresponding joint torque. The second term represents the contribution of the joint dynamics. A critical component in composing the required joint action is the required joint velocity $\dot{\theta}^r_i$, analogous to the required body velocity defined in~(\ref{Vri}). To derive this term, assume an electric actuator driving the $i$th joint with the following dynamics,
\begin{equation}
    I_{mi} \ddot{\theta}_i = \mathscr{J}_i,
    \label{Ji}
\end{equation}
where $\mathscr{J}_i$ denotes the net torque applied to the joint. The net torque is given by,
\begin{equation}
    \mathscr{J}_i = \tau_i - \xi_i^\top F_i,
    \label{Ji2}
\end{equation}
with $\tau_i$ being the actuator torque and $\xi_i^\top F_i$ representing the torque induced by the rigid-body wrench. The required joint action is, then, defined as,
\begin{equation}
    \mathscr{J}^r_{i} = I_{mi} \ddot{\theta}^r_{i} 
    + k_{ai}\bigl(\dot{\theta}^r_{i} - \dot{\theta}_{i}\bigr),
    \label{Jri}
\end{equation}
where $k_{ai} > 0$. The required joint velocity satisfies,
\begin{equation}
    \xi_i \dot{\theta}^r_{i}
    =
    \mathscr{V}^r_{i}
    -
    Ad_{\mathcal{T}^{-1}_{i-1,i}} \mathscr{V}^r_{i-1},
    \label{dthetar}
\end{equation}
with \(\xi_i^{-1} = \xi_i^\top\), which enforces kinematic consistency between adjacent subsystems.

\begin{rmk}
The joint-module control action in \eqref{Jri} also inherits a geometric character from the proposed rigid-body formulation. In particular, the required joint velocity in \eqref{dthetar} is not prescribed independently from a Euclidean joint-space tracking error. Instead, it is induced by the required body velocity of the adjacent rigid-body module through (\ref{Vri}), whose motion is defined intrinsically on \(SE(3)\). Therefore, the joint module enforces the kinematic constraint needed to realize the local geometric motion objective of the rigid body. This construction differs from conventional joint-space formulations, in which the desired rigid-body behavior is typically obtained as a consequence of joint-level error regulation in Euclidean coordinates.
\label{rmk:geometric_joint_action}
\end{rmk}

\begin{thm}
    Consider a complex robotic system which is decomposed into the subsystems, as shown in Fig. \ref{fig:interconnected}, with the rigid body and joint dynamics of (\ref{Fi}) and (\ref{Ji}), respectively. Each subsystem is virtually stable in the sense of Definition \ref{VS} under the modular geometric control (\ref{Fri}) and (\ref{Jri}).
\end{thm}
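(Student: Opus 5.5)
For each module we construct a local storage function, differentiate it along the closed loop, and show it satisfies the inequality of Definition~\ref{VS} with $\bar\epsilon_i=0$. The virtual power flows are the ports $p_i=\langle F_i^r-F_i,\mathscr V_i^r-\mathscr V_i\rangle$ from (\ref{eq:vpf}).

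\textbf{Rigid-body module.} We take
\[
\mathrm V_i=\tfrac12(\mathscr V_i^r-\mathscr V_i)^\top M_i(\mathscr V_i^r-\mathscr V_i)+\Psi_i(e_i),
\]
with $\Psi_i$ from Definition~\ref{Def log}. Subtracting (\ref{Fi}) from (\ref{Fri}) gives
\[
M_i(\mathscr A_i^r-\mathscr A_i)-ad^\top_{\mathscr V_i}M_i(\mathscr V_i^r-\mathscr V_i)+K_v\widetilde{\mathscr V}_i=\widetilde F_i-Ad^\top_{\mathcal T_{i,i+1}^{-1}}\widetilde F_{i+1}.
\]
The term $\widetilde{\mathscr V}_i^\top ad^\top_{\mathscr V_i}M_i\widetilde{\mathscr V}_i$ is not skew in general. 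To kill it, we will argue as in the standard Euler--Poincaré energy identity, pairing with $M_i$ after rewriting the coadjoint term through the symmetric part. Failing that, we will use the equivalent form $ad^\top_{\mathscr V_i}$ acting on $M_i\mathscr V^r_i$ chosen in (\ref{Fri}), so that the velocity-dependent cross term cancels exactly. This yields
\[
\dot{\mathrm V}_i=-\widetilde{\mathscr V}_i^\top K_v\widetilde{\mathscr V}_i+\langle\widetilde F_i,\widetilde{\mathscr V}_i\rangle-\langle\widetilde F_{i+1},Ad_{\mathcal T_{i,i+1}^{-1}}\widetilde{\mathscr V}_i\rangle+\dot\Psi_i.
\]

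\textbf{Configuration energy.} For $\dot\Psi_i$ we use $\dot e_i=e_i\hat{\mathscr V}^{e\wedge}$ with $\mathscr V^e_i$ from (\ref{Ve}), together with the differential of the logarithm, $\dot\eta_i=dexp^{-1}_{\eta_i}\,(\cdot)$. Since $\eta_i$ is an eigenvector of $dexp^{-1}_{-\eta_i}$, one gets $\eta_i^\top K_{z_i}\,dexp^{-1}\mathscr V^e_i=\eta_i^\top K_{z_i}\mathscr V^e_i$ for $K_{z_i}$ chosen compatibly (e.g.\ commuting with $ad_{\eta_i}$, or up to a sign convention). Writing $\mathscr V^e_i=\widetilde{\mathscr V}_i+\Gamma_i\eta_i$ from (\ref{Vri}) gives a cross term $\eta_i^\top K_{z_i}\widetilde{\mathscr V}_i$, which is bounded by Young's inequality, and a dissipative term $-\eta_i^\top K_{z_i}\Gamma_i\eta_i$ (the sign must be checked against the ordering $(\mathcal T^d_i)^{-1}\mathcal T_i$). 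For gains $K_v$ sufficiently large relative to $K_{z_i}$, this gives $-\widetilde\alpha_i\|x_i\|^2$ with $x_i=(\widetilde{\mathscr V}_i,\eta_i)$. The quadratic bounds on $\mathrm V_i$ hold within the injectivity domain of $\log$.

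\textbf{Joint module.} We take $\mathrm V_{J,i}=\tfrac12 I_{mi}(\dot\theta_i^r-\dot\theta_i)^2$. From (\ref{Ji}), (\ref{Ji2}), (\ref{Jri}) and (\ref{tau_i}) we obtain
\[
\dot{\mathrm V}_{J,i}=-k_{ai}(\dot\theta^r_i-\dot\theta_i)^2-(\dot\theta_i^r-\dot\theta_i)\xi_i^\top\widetilde F_i .
\]
Using (\ref{dthetar}), together with (\ref{Vi}) written in error form, $\xi_i(\dot\theta^r_i-\dot\theta_i)=\widetilde{\mathscr V}_i-Ad_{\mathcal T^{-1}_{i-1,i}}\widetilde{\mathscr V}_{i-1}$, we get
\[
-\langle\widetilde F_i,\widetilde{\mathscr V}_i\rangle+\langle Ad^\top_{\mathcal T^{-1}_{i-1,i}}\widetilde F_i,\widetilde{\mathscr V}_{i-1}\rangle .
\]
These are exactly the negatives of the port terms of the adjacent rigid bodies. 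They therefore have the form $p-p'$ with matching labels, which is the joint's virtual stability.

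\textbf{Main obstacle.} The delicate step is $\dot\Psi_i$, i.e.\ the $dexp^{-1}$ identity together with the sign conventions, and the non-skew coadjoint cross term. We expect to handle them by careful gain conditions rather than exact cancellation.
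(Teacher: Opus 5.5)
\textbf{Gap: the coadjoint cross term in the rigid-body module.} Your displayed expression for $\dot{\mathrm V}_i$ drops $\widetilde{\mathscr V}_i^\top ad^\top_{\mathscr V_i}M_i\widetilde{\mathscr V}_i$. You justify this by saying that writing $ad^\top_{\mathscr V_i}(M_i\mathscr V^r_i)$ in (\ref{Fri}) makes it cancel. It does not.

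Subtracting (\ref{Fi}) from (\ref{Fri}) leaves exactly $-ad^\top_{\mathscr V_i}M_i\widetilde{\mathscr V}_i$ in the error dynamics. Hence $\dot{\mathrm V}_i$ contains $+\langle M_i\widetilde{\mathscr V}_i,ad_{\mathscr V_i}\widetilde{\mathscr V}_i\rangle$, which is sign-indefinite. The Euler--Poincar\'e energy identity $\langle M_i\mathscr V_i,ad_{\mathscr V_i}\mathscr V_i\rangle=0$ holds only because the same twist occupies both slots. As a concrete case, take ordering $(\omega,v)$, $M_i=\mathrm{diag}(J,mI_3)$, $\mathscr V_i=(0,s e_1)$ and $\widetilde{\mathscr V}_i=(e_2,e_3)$; the term then equals $sm$.

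Because this term grows linearly in $\|\mathscr V_i\|$, your fallback of ``careful gain conditions'' also fails with gains alone: no constant $K_v$ dominates it uniformly. The missing ingredient is the one the paper uses:
\begin{itemize}
\item the bound $\|ad^\top_{\mathscr V_i}M_i\|\le\mathscr C_i\|\mathscr V_i\|$;
\item the bounded-operating-domain assumption $\sup_t\|\mathscr V_i\|=\hat{\mathscr C}_i<\infty$, which the paper revisits in the proof of the next theorem and the remark after it.
\end{itemize}
With these, Young's inequality absorbs the term into $K_v$, giving the condition $\mu_i\succ0$. Only after that does the Young step on the $\eta_i$--$\widetilde{\mathscr V}_i$ cross term yield $\dot{\mathrm V}_i\le-\varrho_i\mathrm V_i+p_i-p_{i+1}$.

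If you want exact cancellation instead, you must change the controller. For example, replace $-ad^\top_{\mathscr V_i}(M_i\mathscr V^r_i)$ by $C(\mathscr V_i)\mathscr V^r_i$ with the skew-symmetric factorization $C(\mathscr V_i)=M_i\,ad_{\mathscr V_i}-ad^\top_{\mathscr V_i}M_i$, which satisfies $C(\mathscr V_i)\mathscr V_i=-ad^\top_{\mathscr V_i}M_i\mathscr V_i$. That removes the need for a velocity bound, but it is not the control law (\ref{Fri}) in the statement.

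\textbf{The rest matches the paper.} Your joint-module computation is the paper's argument. Your signs, $-\langle\widetilde F_i,\widetilde{\mathscr V}_i\rangle+\langle\widetilde F_i,Ad_{\mathcal T^{-1}_{i-1,i}}\widetilde{\mathscr V}_{i-1}\rangle$, are the consistent ones.

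The sign you left open in the configuration part resolves favorably. From $e_i^{-1}\dot e_i=\hat{\mathscr V}_i-(Ad_{e_i^{-1}}\mathscr V^d_i)^\wedge=-\hat{\mathscr V}^e_i$ you get $\dot\Psi_i=-\eta_i^\top K_{z_i}\mathscr V^e_i$, as in the paper. Substituting $\mathscr V^e_i=\widetilde{\mathscr V}_i+\Gamma_i\eta_i$ then produces the dissipative term $-\eta_i^\top K_{z_i}\Gamma_i\eta_i$ plus a cross term.

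One caution on that step. The property $\mathrm{dexp}^{-1}_{-\eta_i}\eta_i=\eta_i$ says $\eta_i$ is a right eigenvector. What the step actually needs is $\eta_i^\top K_{z_i}\,\mathrm{dexp}^{-1}_{-\eta_i}=\eta_i^\top K_{z_i}$, and on $\mathfrak{se}(3)$ this does not follow from the right-eigenvector fact. The paper obtains this step from a variational argument and the cited result of Teng et al.\ rather than from the eigenvector property.
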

\begin{pf}
    We start the proof by considering the rigid-body subsystem. Let us define the Lyapunov function as,
    \begin{equation}
        \mathrm{V}_i = \|\mathscr{V}^r_{i}-\mathscr{V}_{i}\|^2_{(\mathbb{R}^6,M_i)} + \Psi_i(e_i),
        \label{v_i}
    \end{equation}
    with \(\Psi_i(e_i)\) defined in Definition \ref{Def log}. Subtracting (\ref{Fi}) from (\ref{Fri}), we have,
    \begin{equation}
    \begin{split}
        F^r_i- F_i &= M_i\, (\mathscr{A}^r_i-\mathscr{A}_i)- ad^\top_{\mathscr{V}_i}M_i\, (\mathscr{V}^r_i-\mathscr{V}_i)+K_v(\mathscr{V}^r_{i}-\mathscr{V}_{i})+ Ad_{\mathcal{T}^{-1}_{i,i+1}}^\top\, (F^r_{{i+1}}-F_{{i+1}}) .
    \end{split}
    \label{F_diff}
    \end{equation}
    Now, by taking the time derivative of (\ref{v_i}), using Definition \ref{def:dual_map}, and replacing from (\ref{F_diff}), we have,
    \begin{equation}
    \begin{split}
        \Dot{\mathrm{V}}_i
&= \underbrace{\left\langle
F_i^r-F_i,
\mathscr{V}_i^r-\mathscr{V}_i
\right\rangle}_{p_i}
-\|\mathscr{V}^r_{i}-\mathscr{V}_{i}\|^2_{(\mathbb{R}^6,K_v)}\quad
-\underbrace{\langle F^r_{{i+1}}-F_{{i+1}},Ad_{\mathcal{T}^{-1}_{i,i+1}}(\mathscr{V}^r_{i}-\mathscr{V}_{i})\rangle}_{p_{i+1}}\\
&\quad
+(\mathscr{V}^r_{i}-\mathscr{V}_{i})^\top\,ad^\top_{\mathscr{V}_i}M_i\, (\mathscr{V}^r_i-\mathscr{V}_i)+\dot\Psi_i(e_i)\\
        &=p_i-p_{i+1}-\|\mathscr{V}^r_{i}-\mathscr{V}_{i}\|^2_{(\mathbb{R}^6,K_v)}+\dot\Psi_i(e_i)+(\mathscr{V}^r_{i}-\mathscr{V}_{i})^\top\,ad^\top_{\mathscr{V}_i}M_i\, (\mathscr{V}^r_i-\mathscr{V}_i).
    \end{split}
    \label{dV1}
    \end{equation}
     To continue, we first try to derive \(\dot\Psi_i(e_i)\) by exploiting the variation principle. Consider \(\mathrm{\delta}\) as the variation operator and $\nabla$ as the gradient operator. Then, we can write, 
    \begin{equation}
        \begin{split}
            \mathrm{\delta}\Psi_i(e_i) = \mathbb{G}_{(T_\mathscr{T}\mathrm{SE}(3),K_{z_i})}(\nabla\Psi_i,\mathrm{\delta}e_i)
        \end{split}
    \end{equation}
    where by considering infinitesimal variation around \(e_i\), one can write \(\mathrm{\delta}e_i = e_i\, \delta\psi_i^{\wedge}\) with \(\delta\psi_i^\wedge \in \mathfrak{se}(3)\). Consequently, we can derive the following by exploiting the left-invariance property in (\ref{left_invar}) and \cite[Theorem 1]{teng2022lie} as,
    \begin{equation}
        \begin{split}
            \mathrm{\delta}\Psi_i(e_i) &= \mathbb{G}_{(T_{e_i}\mathrm{SE}(3),K_{z_i})}(e_i\,\hat \eta_i,e_i\,\mathrm{\delta}\psi_i^{\wedge})= \mathbb{G}_{(\mathfrak{se}(3),K_{z_i})}(\,\hat \eta_i,\,\mathrm{\delta}\psi_i^{\wedge})
        \end{split}
        \label{dPsi}
    \end{equation}
    From (\ref{dPsi}) and with slight abuse of notation along with $\delta\psi_i^{\wedge} = -\hat{\mathscr{V}}^e_{i} dt$, the time derivative \(\dot\Psi_i(e_i)\) can be computed as,
    \begin{equation}
        \begin{split}
            \Dot{\Psi}_i(e_i) = -\mathbb{G}_{(\mathfrak{se}(3),K_{z_i})}(\,\hat \eta_i,\,\hat{\mathscr{V}}^e_{i})
        \end{split}
        \label{dtPsi}
    \end{equation}
    with \(\dot e_i = -e_i\,\hat{\mathscr{V}}^e_{i}\).  
    
    On the other hand, in ~(\ref{dV1}), the matrix $ad^\top_{\mathscr{V}_i} M_i$ is not inherently skew-symmetric. To address this issue, we employ Young's inequality together with the upper bound assumption for \(\|ad_{\mathscr{V}_i}^\top M_i\| \le \mathscr{C}_i \|\mathscr{V}_i\|\) where $\mathscr{C}_i > 0$ is a constant depending on the inertia matrix. Accordingly, one can obtain,
    \begin{equation}
    \begin{split}
        (\mathscr{V}^r_{i}-\mathscr{V}_{i}&)^\top \,ad^\top_{\mathscr{V}_i}\,{M}_i\,(\mathscr{V}^r_{i}-\mathscr{V}_i)\leq(\mathscr{V}^r_{i}-\mathscr{V}_{i})^\top(\dfrac{1}{2\alpha_{i}}+2\alpha_{i}\,\Bar{\mathscr{C}}_i^2)(\mathscr{V}^r_{i}-\mathscr{V}_{i})
    \end{split}
    \label{ad_bound}
    \end{equation}
    with \(\sup_{t>0}\|\mathscr{V}_{i}\|_{(\mathbb{R}^6,I)}=\hat{\mathscr{C}}_i <\infty\) and \(\Bar{\mathscr{C}}_i = \mathscr{C}_i\,\hat{\mathscr{C}}_i\). Now, replacing boundedness assumption along with (\ref{dtPsi}) in (\ref{dV1}) yields to,
    \begin{equation}
        \begin{split}
            \Dot{\mathrm{V}}_i & \leq-\dfrac{1}{2}\|\mathscr{V}^r_{i}-\mathscr{V}_{i}\|^2_{(\mathbb{R}^6,\mu_i)}-\mathbb{G}_{(\mathfrak{se}(3),K_{z_i})}(\hat \eta_i,\hat{\mathscr{V}}^e_{i}) +p_i-p_{i+1},
        \end{split}
        \label{DVV2}
    \end{equation}
    with \(\mu_i = 2K_v-{1}/{\alpha_{i}}+\alpha_{i}\,\Bar{\mathscr{C}}_i^2\). Consequently, by using \(\mathscr{V}^r_{i}-\mathscr{V}_{i} = \mathscr{V}^e_{i} - \Gamma_i \eta_i\) through (\ref{Ve}) and (\ref{Vri}), employing Cauchy–Schwarz and Young's inequalities and \(\rho_i = \max{(\lambda_{max}(M_i)},\lambda_{max}(K_{z_i}))\), with \(\lambda_{max}(.)\) being maximum eigenvalue, one can rewrite (\ref{DVV2}) as,
    \begin{equation}
        \begin{split}
            \Dot{\mathrm{V}}_i & \leq -\varrho_i{\mathrm{V}}_i +p_i-p_{i+1},
        \end{split}
        \label{dv}
    \end{equation}
    with \(\varrho_i = {\min(\mu_i, 2\bar{\mu}_i)}/\rho_i/8\), \(\mu_i \succ 0 \), \(\bar{K}_{z_i} = \Gamma_i\mu_i-K_{z_i} \succ 0\), \(\bar{\mu}_i = \Gamma_i\mu_i\Gamma_i-2\epsilon^{-1}\bar{K}_{z_i} \succ 0\), and \(2\epsilon_i \leq \mu_i\bar{K}_{z_i}^{-1}\). It follows from (\ref{dv}) that the virtual exponential stability of the rigid body subsystem is achieved.

    In order to investigate the stability of the joint subsystems, we define,
    \begin{equation}
        \mathrm{V}_{ai} = \dfrac{1}{2}I_{mi}(\Dot{\theta}^r_i-\Dot{\theta}_i)^2,
        \label{Va}
    \end{equation}
    and by subtracting (\ref{Ji}) from (\ref{Jri}) and replacing in the time derivative of (\ref{Va}), we have,
    \begin{equation}
        \Dot{\mathrm{V}}_{ai} = (\Dot{\theta}^r_i-\Dot{\theta}_i)\,(\mathscr{J}^r_{i}-\mathscr{J}_{i})-\,k_{ai}(\Dot{\theta}^r_{i}-\Dot{\theta}_{i})^2.
        \label{DVa1}
    \end{equation}
    Now, by using (\ref{tau_i}) and (\ref{Ji2}), we have,
    \begin{equation}
        \begin{split}
            \Dot{\mathrm{V}}_{ai} = (\Dot{\theta}^r_i-\Dot{\theta}_i)\,\xi_i^\top\,(\,F^r_{i}-\,F_{i})-\,k_{ai}(\Dot{\theta}^r_{i}-\Dot{\theta}_{i})^2,
        \end{split}
        \label{Dva2}
    \end{equation}
    where we assumed \(\tau_i = \tau^r_i\) which is a reasonable assumption in the torque-controlled mode for the driver of the actuator. Now, we can extend (\ref{Dva2}) by using (\ref{dthetar}), Definition \ref{def:dual_map}, and (\ref{Vi}) as,
    \begin{equation}
        \begin{split}
            \Dot{\mathrm{V}}_{ai} &= -\left\langle
F_i^r-F_i,
\mathscr{V}_i^r-\mathscr{V}_i
\right\rangle-\,k_{ai}(\Dot{\theta}^r_{i}-\Dot{\theta}_{i})^2 - \left\langle  \,F^r_{i}-\,F_{i},Ad_{\mathcal{T}^{-1}_{i-1,i}}\,(\mathscr{V}^r_{{i-1}}-\mathscr{V}_{{i-1}})\right \rangle\\
            & = -p_i+p_{i-1}-\,\dfrac{2k_{ai}}{I_{mi}}\mathrm{V}_{ai},
        \end{split}
        \label{DVa}
    \end{equation}
   demonstrating the virtual exponential stability of subsystem.
\end{pf}

\begin{thm}
    The system of inter-connected rigid bodies decomposed into subsystems of rigid body and joint modules with the governing dynamics of (\ref{Fi}) and (\ref{Ji}) under the control action of (\ref{Fri}) and (\ref{Jri}) is exponentially stable.
\end{thm}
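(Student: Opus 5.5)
The plan is to build a composite Lyapunov function by summing the module-level functions from the preceding theorem:
\[
\mathrm{V} = \sum_{i=1}^{n}\left(\mathrm{V}_i + \mathrm{V}_{ai}\right),
\]
with $\mathrm{V}_i$ as in (\ref{v_i}) and $\mathrm{V}_{ai}$ as in (\ref{Va}). The preceding theorem supplies, for each rigid-body module, the inequality $\dot{\mathrm{V}}_i \le -\varrho_i \mathrm{V}_i + p_i - p_{i+1}$. For each joint module it supplies the identity $\dot{\mathrm{V}}_{ai} = -\tfrac{2k_{ai}}{I_{mi}}\mathrm{V}_{ai} - p_i + p_{i-1}$. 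Here the joint-side port term $p_{i-1}$ denotes the pairing $\langle F^r_i - F_i, Ad_{\mathcal{T}^{-1}_{i-1,i}}(\mathscr{V}^r_{i-1}-\mathscr{V}_{i-1})\rangle$, which is exactly the term called $p_{i}$ in the rigid-body inequality of body $i-1$ (the outgoing port). The first step is to fix this indexing carefully, so that each interconnection port appears exactly twice in the sum and with opposite signs.

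The second step is the telescoping. Adding the rigid-body and joint inequalities, the incoming port $p_i$ of body $i$ cancels against the $-p_i$ of joint $i$. The outgoing port of body $i$, namely $-p_{i+1}$, cancels against the $+p_{i}$-type term (shifted index) of joint $i+1$. Two boundary terms remain and must be handled separately:
\begin{itemize}
\item At the base, $\mathscr{V}_0=\mathscr{V}^r_0=0$ because the base is fixed, so the port of joint $1$ toward the base vanishes.
\item At the tip, there is no body $n+1$, so $F_{n+1}=F^r_{n+1}=0$ in free motion and $p_{n+1}=0$.
\end{itemize}
This yields
\[
\dot{\mathrm{V}} \le -\sum_i \varrho_i \mathrm{V}_i - \sum_i \tfrac{2k_{ai}}{I_{mi}}\mathrm{V}_{ai} \le -\varrho\,\mathrm{V},
\qquad
\varrho = \min_i\left\{\varrho_i,\ \tfrac{2k_{ai}}{I_{mi}}\right\} > 0 .
\]
By the comparison lemma, $\mathrm{V}(t) \le \mathrm{V}(0)e^{-\varrho t}$.

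The third step converts this decay of $\mathrm{V}$ into exponential convergence of the tracking errors. Each $\mathrm{V}_i$ is quadratic in $(\mathscr{V}^r_i-\mathscr{V}_i,\eta_i)$ and is bounded above and below by $\lambda_{\min}$ and $\lambda_{\max}$ of $M_i$ and $K_{z_i}$. Therefore $\eta_i \to 0$ and $\mathscr{V}^r_i - \mathscr{V}_i \to 0$ exponentially. Since $\mathscr{V}^e_i = (\mathscr{V}^r_i-\mathscr{V}_i) + \Gamma_i\eta_i$, the velocity error $\mathscr{V}^e_i$ also decays exponentially. Exponential decay of $\eta_i$ means $e_i \to I$, i.e.\ $\mathcal{T}_i \to \mathcal{T}^d_i$ on $SE(3)$. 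Under Assumption 1 this carries over to the end-effector pose, $\mathcal{P}\to\mathcal{P}^d$.

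The main obstacle is the validity domain. The bound (\ref{ad_bound}) used $\sup_t\|\mathscr{V}_i\|<\infty$, and $\Psi_i$ is only locally positive definite, inside the injectivity domain of $\log$. I would close this with a sublevel-set argument:
\begin{itemize}
\item Choose initial errors so that $\mathrm{V}(0)$ lies in a sublevel set on which every $\eta_i$ stays within the injectivity radius.
\item On that set, $\mathscr{V}_i = Ad_{e_i^{-1}}\mathscr{V}^d_i - \mathscr{V}^e_i$ is bounded, because the desired twists are bounded.
\item Since $\dot{\mathrm{V}}\le 0$, the set is forward invariant.
\end{itemize}
This makes the constant $\hat{\mathscr{C}}_i$ consistent a posteriori. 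The result is therefore local (or regional) exponential stability, with gains chosen so that $\mu_i\succ0$ and $\bar{\mu}_i\succ0$ hold for the resulting $\bar{\mathscr{C}}_i$.
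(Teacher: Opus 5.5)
Your proposal is correct and follows essentially the same route as the paper. Both use the composite function $\sum_{i}(\mathrm{V}_i+\mathrm{V}_{ai})$, cancel the virtual power flows using the fixed-base and free-tip boundary conditions to obtain the rate $\min_i(\varrho_i,2k_{ai}/I_{mi})$, and then justify a posteriori the velocity bound and the non-singularity of the logarithm. Your explicit port indexing and your sublevel-set forward-invariance argument are a more careful version of the paper's telescoping identity and of its ``proper initialization'' check via (\ref{supV1})--(\ref{supV2}).
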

\begin{pf}
    We define the Lyapunov function of the unified system as,
    \begin{equation}
        \mathrm{V} = \sum_{i=1}^n(\mathrm{V}_i+{\mathrm{V}}_{ai}),
        \label{V}
    \end{equation}
    which by taking its time derivative and replacing from Theorem 1 and the fact that \(\sum_{i=1}^n-p_i+p_{i-1}+p_i-p_{i+1} = p_0-p_{n+1} = 0\) with the assumption of fixed ground for the base and no interaction with environment for the last frame, we have,
    \begin{equation}
        \Dot{\mathrm{V}} \leq -\underline{\varrho}{\mathrm{V}},
        \label{DV}
    \end{equation}
    with \(\underline{\varrho} = \min_{i=1...n}({\varrho}_i,{2k_{ai}}/{I_{mi}})\). Consequently, one can show the exponential convergence of \(V \leq e^{-\underline{\varrho}t}V(0) \leq e^{-\underline{\varrho}t}\overline{\varrho} \|x(0)\|^2_{(\mathbb{R}^6,I)}\) with \(\overline{\varrho} = \max(\rho_i, I_{m_i})\), which following from (\ref{DV}), (\ref{V}), (\ref{v_i}), and (\ref{eq:configuration_energy}) leads to \(\|\Psi(e_i)\|^2_{(\mathfrak{se}(3),K_{zi})} \leq e^{-\underline{\varrho}t} \overline{\varrho} \|x(0)\|^2_{(\mathbb{R}^6,I)}\), demonstrating the non-singularity of the logarithmic map with the proper initialization. Moreover, \(x_i = [\Dot{\theta}_i^r-\Dot{\theta}_i, (\mathscr{V}^r_i-\mathscr{V}_i)^\top,\eta_i^\top]_{i=1}^n\).
    
    Now, we will show that our assumption regarding upper bound of the velocity, \(\sup_{t>0}\|\mathscr{V}_{i}\|_{(\mathbb{R}^6,I)}=\hat{\mathscr{C}}_i\) is correct. From (\ref{DV}), it can be shown that \(\sup_{t>0}\|\mathscr{V}^r_{i}-\mathscr{V}_{i}\|_{(\mathbb{R}^6,I)}\leq \sqrt{2\overline{\varrho}/\lambda_{min}(M_i)}\|x(0)\|_{(\mathbb{R}^6,I)}\). Then, we can write by exploiting (\ref{Ve}) and (\ref{Vri}), 
    \begin{equation}
        \sup_{t>0}\|\mathscr{V}_{i}\| \leq \sup_{t>0} \|\mathscr{V}^r_{i}-\mathscr{V}_{i}\| + \sup_{t>0}\|Ad_{e_i^{-1}}\mathscr{V}^d_{i}\|+\sup_{t>0}\|\Gamma_i\eta_i\|.
        \label{supV1}
    \end{equation}
    The inequality in (\ref{supV1}) can be written in the following form by exploiting the first-order approximation of \(Ad_{e^{-1}_i} \leq 1 + \|\eta_i\|\), boundedness of the desired velocity \(\|\mathscr{V}^d_{i}\| \leq N^d_{i}\),
     \begin{equation}
    \begin{split}
        \sup_{t>0}||\mathscr{V}_{i}|| \leq &\left(\sqrt{\dfrac{2\overline{\varrho}}{\lambda_{min}(M_i)}}(1 + N^d_{i})\right. \left.+\sqrt{\dfrac{2\lambda_{max}(\Gamma_i)^2}{\lambda_{min}(K_{z_i})}}\right)\|x(0)\|.
    \end{split}
        \label{supV2}
    \end{equation}
\end{pf}
As a result, the exponential stability of the overall interconnected rigid-body system under the proposed controller is established. Furthermore, by proper initialization, the logarithmic error function is shown to remain nonsingular, and an appropriate upper bound on the actual velocity is introduced to account for the non-skew-symmetric structure of the dynamics.
\begin{rmk}
The condition \(\sup_{t\ge 0}\|V_i(t)\|<\infty\) should be interpreted as a bounded-operating-domain assumption. Such an assumption is natural for robotic manipulators, since actuator torque limits, velocity saturation, finite power capacity, and safety constraints prevent physically admissible motions from generating unbounded body velocities. In the present analysis, this bound is used to estimate the non-skew-symmetric term associated with the spatial rigid-body dynamics and does not impose any additional structural restriction on the proposed modular geometric controller. In fact, the inequality in \eqref{supV2} shows that this upper bound can be selected as large as required by the admissible operating range of the system, according to its actuation capacity and inertial characteristics.
\end{rmk}

\section{Adaptive Modular Geometric Control}
\label{Adaptive Modular Geometric Control}
The preceding development has established the modular geometric control framework for the nominal rigid-body system. We now extend this formulation to account for parametric uncertainty in the dynamics, which may otherwise compromise tracking performance. To this end, an adaptive modular geometric control scheme is developed in this section. Without loss of generality, the joint subsystems are assumed to retain the same control law as in the preceding section, and the adaptive design is therefore presented for the rigid-body modules.

Consider the equation of motion for a decomposed rigid body in space as in (\ref{Fi}). In order to tackle the parameter uncertainty in inertial matrix, we need to develop an adaptive control law as,
\begin{equation}
    F^r_{i} =  \hat{M}_i\,\mathscr{A}^r_{i} -ad^\top_{\mathscr{V}_i}\,(\hat{M}_i\,\mathscr{V}^r_{i})+K_v(\mathscr{V}^r_{i}-\mathscr{V}_{i})+ Ad_{\mathcal{T}^{-1}_{i,i+1}}^\top\, F^r_{i+1}.
    \label{Fr_ad}
\end{equation}
In the following, the geometric adaptation law is designed that updates the spatial inertia matrix.

Let \(L_i,\hat{L}_i \in \mathscr{P}(4)\) denote the true and estimated pseudo-inertia matrices of link \(i\), respectively. There is a one-to-one mapping between \(L_i\) and \(M_i\) as follow,
\begin{equation}
\begin{split}
    L_i = f(M_i) = f(I_{A_i},h_{A_i},m_i)=\begin{bmatrix}
        \Sigma_{A_i}& h_{{A_i}}\\
        h^\top_{A_i}& m_i
    \end{bmatrix},\\
    M_i = f^{-1}(\Sigma_{A_i},h_{A_i},m_i)=\begin{bmatrix}
        I_{A_i}& \hat h_{{A_i}}\\
        -\hat h_{A_i}& m_i\cdot I_3
    \end{bmatrix}
\end{split}
\end{equation}
with \(\Sigma_{A_i}=0.5\operatorname{tr}(I_{A_i})I_3-I_{A_i}\). Since \(\mathscr{P}(4)\) is the manifold of symmetric positive-definite matrices, the discrepancy between \(L_i\) and \(\hat{L}_i\) is naturally quantified by a divergence defined directly on this manifold. To this end, consider a twice continuously differentiable and strictly convex function \(h:\mathscr{P}(4)\rightarrow \mathbb{R}\). The associated Bregman divergence is defined as,
\begin{equation}\label{Bregman1}
d_h(L_i\|\hat{L}_i)
=
h(L_i)-h(\hat{L}_i)-\mathbb{G}_{(T_{\hat L_i}\mathscr{P}(4),\hat L_i)}
\left(
\nabla h(\hat L_i),
L_i-\hat L_i
\right),,
\end{equation}
with the following properties: 1) \(d_h(\hat{L}_i||L_i) \geq 0 \) with \(d_h(\hat{L}_i||L_i) = 0\) if and only if \(\hat{L}_i= L_i\), 2) \(d_h(\hat{L}_i||L_i) \) is strictly convex \cite{2019geometric}.
However, since it is not symmetric in general, it should be regarded as a pseudo-distance rather than a true metric. A particularly suitable choice on \(\mathscr{P}(4)\) is the log-determinant potential $h(L_i)=-\log|L_i|$. With this choice, the Bregman divergence becomes,
\begin{equation}\label{Bregman}
d_h(L_i\|\hat{L}_i)
=
\gamma\left(
\log\frac{|\hat{L}_i|}{|L_i|}
+\mathrm{tr}(\hat{L}_i^{-1}L_i)-4
\right),
\end{equation}
where \(\gamma>0\) is a scalar weighting factor. Equivalently, it may be written as
\begin{equation}
d_h(L_i\|\hat{L}_i)
=
\gamma\sum_{j=1}^{4}\left(-\log \lambda_j+\lambda_j-1\right),
\label{d_h lam}
\end{equation}
where \(\{\lambda_j\}_{j=1}^{4}\) are the eigenvalues of \(\hat{L}_i^{-1}L_i\). This divergence is affine-invariant and provides a second-order approximation of the affine-invariant Riemannian metric in \eqref{Metric}, making it a suitable candidate for the parameter error term in the Lyapunov analysis. The time derivative of \eqref{Bregman} can then be obtained as,
\begin{equation}\label{Ddot}
\dot{d}_h(L_i\|\hat{L}_i)
=
\gamma\,\mathrm{tr}\!\left(
\hat{L}_i^{-1}\dot{\hat{L}}_i\hat{L}_i^{-1}\tilde{L}_i
\right),
\end{equation}
with $\tilde{L}_i=\hat{L}_i-L_i$. Hence, the log-determinant Bregman divergence yields a valid Lyapunov-like parameter error measure on \(\mathscr{P}(4)\), while preserving the geometric structure of the pseudo-inertia manifold.
In the following, we will show that the update of pseudo-inertial matrix as,
\begin{equation}
    \Dot{\hat{L}}_i = \frac{1}{\gamma}\hat{L}_i\,\hat{\mathscr{L}}_i \hat{L}_i - \sigma(\hat{L}_i - L_i^0),
    \label{M_adapt}
\end{equation}
with \(L_i^0 \succ 0 \) being nominal value for the estimation, will keep the geometric property of the controller and adaptation consistent, yet requiring only one adaptation gain \(\gamma\) for the entire inter-connected rigid body system. Provided that the initial estimate satisfies \(\hat{L}_i(0)=L_i^0\succ 0\), the proposed update keeps the estimate on the SPD manifold and in a physically meaningful neighborhood of the nominal pseudo-inertia matrix. The leakage term further prevents parameter drift by pulling the estimate toward \(L_i^0\) in the absence of sufficiently informative excitation.

    \begin{lem}
        For the Bregman divergence defined in (\ref{Bregman}), the following upper bound for some constant \(C_i >0\) holds,
        \begin{equation}
        d_h(L_i\|\hat{L}_i)
        \;\le\;
        C_i\Big(
            \|E_i\|_{\hat{L}_i}^{2}
            + \|L_i - L_i^0\|_{\hat{L}_i}^{2}
        \Big).
        \label{d_h bound}
    \end{equation}
    \end{lem}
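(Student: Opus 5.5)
We interpret $E_i$ as the deviation of the estimate from its nominal value, $E_i := \hat L_i - L_i^0$, which is the quantity driven by the leakage term in \eqref{M_adapt}. We read $\|X\|_{\hat L_i}$ as the norm induced by the affine-invariant metric \eqref{Metric} at the base point $\hat L_i$:
\begin{equation*}
\|X\|_{\hat L_i}^2=\tfrac12\mathrm{tr}(\hat L_i^{-1}X\hat L_i^{-1}X).
\end{equation*}
The plan has three steps. First, bound $d_h$ by a quadratic in the parameter error $\tilde L_i=\hat L_i-L_i$ measured in this metric. Second, split $\tilde L_i$ into the two terms on the right-hand side of \eqref{d_h bound}. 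Third, justify the constant $C_i$ through a uniform eigenvalue bound.

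\emph{Step 1 (spectral reduction).} Start from the eigenvalue form \eqref{d_h lam}. Define the symmetric matrix
\begin{equation*}
S_i:=\hat L_i^{-1/2}\tilde L_i\hat L_i^{-1/2}.
\end{equation*}
Then $\hat L_i^{-1}L_i$ is similar to $\hat L_i^{-1/2}L_i\hat L_i^{-1/2}=I-S_i$. Hence its eigenvalues are $\lambda_j=1-\mu_j$, where $\mu_j$ are the real eigenvalues of $S_i$. Moreover,
\begin{equation*}
\sum_j\mu_j^2=\mathrm{tr}(S_i^2)=\mathrm{tr}(\hat L_i^{-1}\tilde L_i\hat L_i^{-1}\tilde L_i)=2\|\tilde L_i\|_{\hat L_i}^2 .
\end{equation*}
Therefore it suffices to bound the scalar function $g(\lambda)=\lambda-1-\log\lambda$ by $c\,(\lambda-1)^2$.

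\emph{Step 2 (scalar estimate).} The function $g$ is smooth on $(0,\infty)$ and satisfies $g(1)=g'(1)=0$ and $g''(1)=1$. Consequently, the ratio $g(\lambda)/(\lambda-1)^2$ extends continuously to $\lambda=1$ with value $1/2$. It is bounded on any compact interval $[\underline\lambda,\overline\lambda]\subset(0,\infty)$ by some constant $c(\underline\lambda)$, which depends essentially only on $\underline\lambda$. This is because $g(\lambda)/(\lambda-1)^2\to0$ as $\lambda\to\infty$, while the ratio blows up like $-\log\lambda$ as $\lambda\to0^+$. Combining this with Step 1 gives
\begin{equation*}
d_h\le 2\gamma c\,\|\tilde L_i\|_{\hat L_i}^2 .
\end{equation*}
Now write $\tilde L_i=E_i-(L_i-L_i^0)$ and use $(a+b)^2\le 2a^2+2b^2$ with the triangle inequality for the norm $\|\cdot\|_{\hat L_i}$. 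This yields \eqref{d_h bound} with $C_i=4\gamma c$.

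\emph{Main obstacle.} The key difficulty is the uniform lower bound $\lambda_j\ge\underline\lambda>0$. Without it, the logarithmic term is not dominated by any quadratic, and no global constant exists. I would obtain it from
\begin{equation*}
\lambda_{\min}(\hat L_i^{-1}L_i)\ge \lambda_{\min}(L_i)/\lambda_{\max}(\hat L_i),
\end{equation*}
using $L_i\succ0$ fixed. Thus one needs $\hat L_i$ to remain bounded above. This must be taken on the semi-global domain of interest, where the leakage term $-\sigma(\hat L_i-L_i^0)$ together with bounded regressor signals keeps $\hat L_i$ in a compact subset of $\mathscr P(4)$. I would state this explicitly as the source of the dependence of $C_i$ on the domain, i.e., $C_i$ depends on $\sup_t\lambda_{\max}(\hat L_i)$. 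This matches the semi-global character of the subsequent uniform ultimate boundedness result.
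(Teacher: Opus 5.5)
Your proof is correct, but it takes a different route from the paper's.

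\emph{The paper's route.} It works through the affine-invariant geodesic distance $d_R$. On an a priori compact set $\mathcal K\subset\mathscr{P}(4)$ it bounds $d_h\le c\,d_R(L_i,\hat L_i)^2$ by Taylor's theorem. It then applies the triangle inequality for $d_R$ through the nominal point $L_i^0$. Finally it invokes a Lipschitz equivalence between $d_R$ and the $\hat L_i$-based norm on $\mathcal K$ to reach $\|L_i-L_i^0\|^2_{\hat L_i}$ and $\|E_i\|^2_{\hat L_i}$; the constants $k_{C,i},k_{E,i}$ in that step are asserted, not computed.

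\emph{Your route.} You compare $d_h$ directly with the metric norm of $\tilde L_i$ at the base point $\hat L_i$. The congruence $S_i=\hat L_i^{-1/2}\tilde L_i\hat L_i^{-1/2}$ gives the exact identity $\sum_j(\lambda_j-1)^2=2\|\tilde L_i\|^2_{\hat L_i}$, and the scalar estimate $g(\lambda)\le c\,(\lambda-1)^2$ completes the comparison. The split $\tilde L_i=E_i-(L_i-L_i^0)$ then uses the triangle inequality of a genuine inner-product norm at a fixed base point, so no distance-to-norm conversion is needed.

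\emph{What each buys.} Your route gives an explicit constant. Since $g(\lambda)/(\lambda-1)^2$ is decreasing on $(0,\infty)$, you may take $C_i=4\gamma\,g(\underline\lambda)/(1-\underline\lambda)^2$, and the factor $\gamma$ in \eqref{Bregman}, which the paper's $c=\tfrac12 e^{\delta}$ omits, is tracked correctly. It also needs a strictly weaker hypothesis: only $\lambda_{\min}(\hat L_i^{-1}L_i)\ge\underline\lambda>0$, i.e.\ $\hat L_i$ bounded above. The paper's two comparisons degrade in opposite directions: $d_h\le c\,d_R^2$ fails when $\hat L_i$ becomes small relative to $L_i$, and $d_R^2\lesssim\|\cdot\|^2_{\hat L_i}$ fails when $\hat L_i$ becomes large. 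So the paper needs two-sided confinement; in exchange, it keeps the intrinsic triangle picture on $\mathscr{P}(4)$.

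One caution about your closing justification of $\underline\lambda$. The leakage term alone does not keep $\hat L_i$ bounded above, because \eqref{M_adapt} contains the quadratic term $\gamma^{-1}\hat L_i\hat{\mathscr{L}}_i\hat L_i$. The scalar analogue $\dot x=ax^2-\sigma(x-x_0)$ with constant $a>0$ escapes in finite time for large $x(0)$, even though $a$ is bounded. You should either state the spectral bound as a hypothesis, as the paper effectively does with $\mathcal K$, or extract it from a sublevel set of $d_h$ itself. There, $d_h\le b$ forces every $\lambda_j$ above the smaller root of $\gamma\phi(\lambda)=b$, which is the natural semi-global bootstrapping for the later boundedness result.
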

    \begin{pf}
        First, define the parameter estimation error and auxiliary variable as,
    \begin{equation}
        \Tilde{L}_i := \hat{L}_i - L_i \in\mathrm{Sym}(4), 
        \qquad
    E_i := \hat{L}_i - L_i^0 \in\mathrm{Sym}(4),
    \end{equation}
    from which,
    \begin{equation}
        \Tilde{L}_i = E_i - (L_i - L_i^0).
        \label{eq:Ltilde_decomp}
    \end{equation}
        Let us define the geodesic distance derived from metric (\ref{Metric}) as,
    \begin{equation}
    d_R(L,\hat{L})
    :=
    \left\|
    \log\!\big(L^{-\frac{1}{2}} \hat{L}\,L^{-\frac{1}{2}}\big)
    \right\|_{(T_I\mathscr{P}(4),I)},
    \label{eq:dR_def}
\end{equation}
    or
    \begin{equation}
        d_R(L,\hat{L})
        = \Bigg( \sum_{j=1}^4 \big(\log{\bar{\lambda}_j})^2 \Bigg)^{\frac{1}{2}},
        \label{d_R lam}
    \end{equation}
   with $\{\bar{\lambda}_j\}_{j=1}^4$ being the eigenvalues of \(L^{-1}\hat{L}\) with \({\lambda}_j = \bar{\lambda}_j^{-1}\). Now, for any compact set $\mathcal{K}\subset\mathscr{P}(4)$, by Taylor's approximation theorem applied to (\ref{d_h lam}) and (\ref{d_R lam}), there exists a constant $c>0$ such that
    \begin{equation}
        d_h(L\|\hat{L})
        \;\le\;
        c\, d_R(L,\hat{L})^{2},
        \qquad \forall\, L,\hat{L}\in \mathcal{K},
        \label{eq:local_equiv_dh_dR}
    \end{equation}
    where,
    \[
    c = \tfrac12 e^{\delta},
    \qquad
    \delta :=
    \sup_{L,\hat L\in\mathcal K}
    \lambda_{\max}\!\big(|\log(L^{-1}\hat L)|\big).
    \]
    Considering the square of triangle inequality for the distance metric (\ref{eq:dR_def}) according to Fig. \ref{fig:manifold_geometry} 
     and using $(a+b)^2 \le 2(a^2+b^2)$, the following upper bound can be established,
    \begin{equation}
        d_h(L_i\|\hat{L}_i)
        \;\le\;
        2c\Big(
            d_R(L_i,L_i^0)^{2}
            + d_R(L_i^0,\hat{L}_i)^{2}
        \Big).
        \label{eq:dh_bound_dR_terms}
    \end{equation}
Moreover, the affine-invariant distance (\ref{eq:dR_def}) is Lipschitz equivalent to the ambient norm on $\mathcal K$.
Hence, there exist constants $k_{C,i},k_{E,i}>0$ such that, for all
$L_i, L_i^0, \hat L_i\in\mathcal K$,
\begin{equation}
\begin{split}
    d_R(L_i,L_i^0)^{2}
    &\le
    k_{C,i}\,\|L_i - L_i^0\|_{(T_{\hat{L}_i}\mathscr{P}(4),\hat{L}_i)}^{2},\\
    d_R(L_i^0,\hat{L}_i)^{2}
    &\le
    k_{E,i}\,\|\hat{L}_i - L_i^0\|_{(T_{\hat{L}_i}\mathscr{P}(4),\hat{L}_i)}^{2}.
\end{split}
\label{upper_dR}
\end{equation}
Finally, substituting (\ref{upper_dR}) into \eqref{eq:dh_bound_dR_terms} will yield to (\ref{d_h bound}) with \(C_i := 2c\,\max\{k_{C,i},k_{E,i}\}\).
    \end{pf}
    
    To facilitate the subsequent stability analysis, we define the following equivalent form,
    \begin{equation}
    \begin{split}
        &\hat{M}_i\,\mathscr{A}^r_{i} -ad^\top_{\mathscr{V}_i}\,(\hat{M}_i\,\mathscr{V}^r_{i}) =  tr(\hat{L}_i \hat{\mathscr{L}}_i),\\
        &M_i\,\mathscr{A}_i -ad^\top_{\mathscr{V}_i}\,(M_i\,\mathscr{V}_i)= tr({L}_i \mathscr{L}_i),
    \end{split}
        \label{P3}
    \end{equation}
    where \(\mathscr{L}_i\) is a unique symmetric matrix, defined in \cite{2019geometric}. On the other hand, by subtracting (\ref{Fi}) from (\ref{Fr_ad}) along with adding \(tr(L_i \mathscr{L}_i)-tr(L_i \mathscr{L}_i)\) and using (\ref{P3}), we have,
    \begin{equation}
        \begin{split}
        {M}_i (\mathscr{A}^r_{i}-\mathscr{A}_{i}) &= ({F}^r_{{i}} - {F}_{{i}}) - tr(\Tilde{L}_i \hat{\mathscr{L}}_i) +\,Ad_{\mathcal{T}^{-1}_{i,i+1}}^\top\, (F^r_{{i+1}}-F_{{i+1}})+ad^\top_{\mathscr{V}_i}\,({M}_i\,(\mathscr{V}^r_{i}-\mathscr{V}_i))-K_v(\mathscr{V}^r_{i}-\mathscr{V}_{i}).\\
    \end{split}
    \label{F_til}
    \end{equation}
\begin{thm}
Consider a complex robotic system decomposed into the subsystems, as shown in Fig. \ref{fig:interconnected}, with the rigid body an joint dynamics of (\ref{Fi}) and (\ref{Ji}) in the presence of parameter uncertainties. Each subsystem is virtually stable in the sense of Definition \ref{VS} under the adaptive modular geometric control (\ref{Fr_ad}) and (\ref{Jri}) and adaptation law (\ref{M_adapt}). 
\end{thm}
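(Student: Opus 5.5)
The plan follows the proof of Theorem 1, augmenting each rigid-body Lyapunov function with the Bregman divergence of \eqref{Bregman}. For the $i$-th rigid-body module I take
\begin{equation*}
\mathrm{V}_i^{a} = \|\mathscr{V}^r_{i}-\mathscr{V}_{i}\|^2_{(\mathbb{R}^6,M_i)} + \Psi_i(e_i) + d_h(L_i\|\hat L_i).
\end{equation*}
The joint modules keep the control law \eqref{Jri}. Their analysis in \eqref{Va}--\eqref{DVa} uses only $\tau_i=\tau_i^r$, \eqref{dthetar}, \eqref{Vi} and the duality of Definition \ref{def:dual_map}, and none of these involves the inertia estimate. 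Hence the identity $\dot{\mathrm{V}}_{ai} = -p_i + p_{i-1} - (2k_{ai}/I_{mi})\mathrm{V}_{ai}$ carries over verbatim, with $\bar\epsilon=0$.

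For the rigid-body module, I differentiate the kinetic part using the error equation \eqref{F_til} instead of \eqref{F_diff}. This reproduces every term of \eqref{dV1}: the ports $p_i$ and $-p_{i+1}$, the dissipation through $K_v$, and the non-skew term $ad^\top_{\mathscr{V}_i}M_i$. It also produces one extra term, $-\langle \mathrm{tr}(\tilde L_i\hat{\mathscr{L}}_i), \mathscr{V}^r_i-\mathscr{V}_i\rangle$. The terms $\dot\Psi_i$ and $ad^\top$ are handled exactly as in \eqref{dtPsi}, \eqref{ad_bound} and \eqref{DVV2}. The bound on $\hat{\mathscr C}_i$ is now only valid on a compact operating set, which is the source of the semi-global qualifier.

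The extra term is cancelled by the adaptation law. I interpret $\hat{\mathscr{L}}_i$ in \eqref{M_adapt} as the symmetric matrix built from $\hat{\mathscr{L}}_i$ and the velocity error $\mathscr{V}^r_i-\mathscr{V}_i$ through the linear-in-parameters identity \eqref{P3} (the construction of \cite{2019geometric}), so that
\begin{equation*}
\langle \mathrm{tr}(\tilde L_i \hat{\mathscr{L}}_i),\mathscr{V}^r_i-\mathscr{V}_i\rangle = \mathrm{tr}(\tilde L_i \hat{\mathscr{L}}_i).
\end{equation*}
Substituting \eqref{M_adapt} into \eqref{Ddot}, the factor $\hat L_i^{-1}\hat L_i\hat{\mathscr{L}}_i\hat L_i\hat L_i^{-1}$ collapses to $\hat{\mathscr{L}}_i$, and the $1/\gamma$ cancels against the $\gamma$ in \eqref{Ddot}. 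What remains is
\begin{equation*}
\mathrm{tr}(\hat{\mathscr{L}}_i\tilde L_i) - \gamma\sigma\,\mathrm{tr}(\hat L_i^{-1}E_i\hat L_i^{-1}\tilde L_i).
\end{equation*}
The first part cancels the extra term. This step is where the sign conventions of \eqref{P3} and \eqref{F_til} must be checked carefully, and it is the step I would verify first.

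For the leakage term I would use \eqref{eq:Ltilde_decomp}:
\begin{equation*}
-\gamma\sigma\,\mathrm{tr}(\hat L_i^{-1}E_i\hat L_i^{-1}\tilde L_i) = -2\gamma\sigma\|E_i\|^2_{\hat L_i} + 2\gamma\sigma\,\mathbb G_{\hat L_i}(E_i,L_i-L_i^0).
\end{equation*}
Applying Young's inequality gives
\begin{equation*}
\le -\gamma\sigma\|E_i\|^2_{\hat L_i} + \gamma\sigma\|L_i-L_i^0\|^2_{\hat L_i}.
\end{equation*}
By the Lemma, $-\gamma\sigma\|E_i\|^2_{\hat L_i} \le -(\gamma\sigma/C_i)\,d_h + \gamma\sigma\|L_i-L_i^0\|^2_{\hat L_i}$. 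Combining this with \eqref{dv} yields
\begin{equation*}
\dot{\mathrm V}_i^a \le -\min(\varrho_i,\gamma\sigma/C_i)\,\mathrm V_i^a + p_i - p_{i+1} + \bar\epsilon_i,
\qquad
\bar\epsilon_i = 2\gamma\sigma\sup\|L_i-L_i^0\|^2_{\hat L_i}.
\end{equation*}
This is virtual stability in the sense of Definition \ref{VS}.

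The main obstacle is the uniformity of these constants. $C_i$ comes from the compact set $\mathcal K$ in the Lemma, and the weighted norm $\|\cdot\|_{\hat L_i}$ requires $\hat L_i$ to stay bounded away from the boundary of $\mathscr P(4)$. I would argue invariance of a sublevel set of the total function $\mathrm V^a$. The function $d_h$ is a barrier: it blows up as $\hat L_i$ degenerates, since $\mathrm{tr}(\hat L_i^{-1}L_i)\to\infty$, and it is coercive as $\hat L_i\to\infty$ through the $\log$ term. Bounded $d_h$ therefore confines $\hat L_i$ to a compact subset of $\mathscr P(4)$. The SPD property is preserved by the congruence structure of \eqref{M_adapt} together with the leakage toward $L_i^0\succ0$. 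Taking $\mathcal K$ to be this compact set fixes $C_i$, $\bar\epsilon_i$ and $\hat{\mathscr C}_i$ for all initial conditions in any prescribed ball, which yields the semi-global statement.
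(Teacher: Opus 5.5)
Your proposal is correct and follows the paper's proof essentially step for step. It uses the same augmented function $\mathrm{V}_i + d_h(L_i\|\hat L_i)$, cancels the parameter-error term of \eqref{F_til} with the $\frac{1}{\gamma}\hat L_i\hat{\mathscr L}_i\hat L_i$ part of \eqref{M_adapt}, applies Young's inequality to the leakage term via \eqref{eq:Ltilde_decomp}, and uses the Lemma to trade $-\|E_i\|^2_{\hat L_i}$ for $-d_h/C_i$ plus a residual proportional to $\|L_i-L_i^0\|^2_{\hat L_i}$, with the joint-module identity carried over unchanged. Your explicit contraction of $\hat{\mathscr L}_i$ with the velocity error, and your use of the barrier property of $d_h$ to confine $\hat L_i$ to a compact set (so that $C_i$ and $\bar\epsilon_i$ are genuine constants), make precise what the paper leaves implicit and touches on only through its bound on $\lambda_{\max}(\hat L_i^{-1}L_i)$; closing that sublevel-set invariance still requires gains for which the level dominates $\bar\epsilon_i/\upsilon$, since both depend on it, a check the paper also omits.
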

\begin{pf}
    Define the local Lyapunov function as,
    \begin{equation}
        \mathrm{v}_i = \mathrm{V}_i+d_h(L_i||\hat{L}_i).
        \label{vi}
    \end{equation}
    First we investigate the time derivative of \(d_h(L_i||\hat{L}_i)\) in (\ref{Ddot}). 
\begin{figure}[pos=t]
    \centering
    \begin{tikzpicture}[scale=1.25]

    \shade[draw=blue!35, top color=blue!12, bottom color=blue!3]
        (-3,0) .. controls (-2,1.6) and (2,1.6) .. (3,0)
        .. controls (2,-1.3) and (-2,-1.3) .. cycle;

    \draw[blue!25] (-2.2,0.05) .. controls (-1.4,0.9) and (1.4,0.9) .. (2.2,0.05);
    \draw[blue!18] (-1.3,0.05) .. controls (-0.7,0.45) and (0.7,0.45) .. (1.3,0.05);

    \node at (2.55,1.05) {$\mathscr{P}(4)$};

    \coordinate (L)    at (-1.6,-0.15);
    \coordinate (L0)   at (-0.1,0.82);
    \coordinate (Lhat) at (1.45,-0.05);

    \draw[thick, fill=black!8] (L) ellipse [x radius=0.34, y radius=0.18, rotate=25];
    \draw[thick, dashed, fill=black!4] (L0) ellipse [x radius=0.28, y radius=0.14, rotate=-15];
    \draw[thick, fill=black!12] (Lhat) ellipse [x radius=0.38, y radius=0.16, rotate=8];

    \node[below left] at (-1.1,-0.15) {$L_i$};
    \node[above] at (0.1,0.82) {$L_i^0$};
    \node[below right] at (1.45,-0.15) {$\hat{L}_i$};

    \draw[thick] (L) to[out=10,in=170] (Lhat);
    \draw[dashed, thick] (L) to[out=55,in=210] (L0);
    \draw[dashed, thick] (L0) to[out=330,in=120] (Lhat);

    \node at (-0.05,-0.15) {$d_R(L_i,\hat{L}_i)$};
    \node at (-1.1,0.8) {$d_R(L_i,L_i^0)$};
    \node at (1.2,0.7) {$d_R(L_i^0,\hat{L}_i)$};

    \end{tikzpicture}
    \caption{Schematic illustration of the SPD manifold $\mathscr{P}(4)$. The true inertia $L_i$, nominal model $L_i^0$, and estimate $\hat{L}_i$ are depicted by local ellipsoidal proxies associated with symmetric positive-definite matrices, while the curves represent Riemannian geodesics between them.}
    \label{fig:manifold_geometry}
\end{figure}
By substituting (\ref{M_adapt}) in (\ref{Ddot}) and using (\ref{eq:Ltilde_decomp}), we have,
    \begin{align}
    \Dot{d}_h
    &= \mathrm{tr}(\hat{\mathscr{L}}_i\Tilde{L}_i)
    - \gamma\sigma\,
        \mathrm{tr}\!\left(
            \hat{L}_i^{-1}E_i\hat{L}_i^{-1}\Tilde{L}_i
        \right),
    \label{eq:dh_raw}
    \end{align}
    with the second term,
    \begin{equation}
        \mathrm{tr}(\hat{L}_i^{-1}E_i\hat{L}_i^{-1}\Tilde{L}_i)
        = 2\, \mathbb{G}_{(T_{\hat{L}_i}\mathscr{P}(4),\hat{L}_i)}(E_i,\Tilde{L}_i).
        \label{E_tr}
    \end{equation}
    Substituting \eqref{eq:Ltilde_decomp} into (\ref{E_tr}) in the sense of (\ref{Metric}), we have,
    \begin{equation}
        \begin{split}
            \mathbb{G}_{(T_{\hat{L}_i}\mathscr{P}(4),\hat{L}_i)}(E_i,\Tilde{L}_i)=
        \|E_i\|_{(T_{\hat{L}_i}\mathscr{P}(4),\hat{L}_i)}^2-
       \mathbb{G}_{(T_{\hat{L}_i}\mathscr{P}(4),\hat{L}_i)}(E_i,L_i - L_i^0).
        \end{split}
        \label{E3}
    \end{equation}
    Now, by applying Young’s inequality to the second term of (\ref{E3}) for any $\beta_i>0$ and substituting all in \eqref{eq:dh_raw} yields,
    \begin{equation}
        \begin{split}
            \Dot{d}_h
        &\le 
        \mathrm{tr}(\hat{\mathscr{L}}_i\Tilde{L}_i)-\gamma\sigma(1-\beta_i)\|E_i\|_{(T_{\hat{L}_i}\mathscr{P}(4),\hat{L}_i)}^2+ \frac{\gamma\sigma}{4\beta_i}\|L_i-L_i^0\|_{(T_{\hat{L}_i}\mathscr{P}(4),\hat{L}_i)}^{2}.
        \end{split}
        \label{d_h_dot}
    \end{equation}
    
    Now, by replacing (\ref{d_h_dot}) in the time derivative of (\ref{vi}) and using (\ref{F_til}), (\ref{V}), and (\ref{d_h bound}) along with following the same procedure in previous section such as (\ref{dV1}), (\ref{dtPsi}), and  (\ref{ad_bound}), we have,
    \begin{equation}
        \begin{split}
            \Dot{\mathrm{v}}_i & \leq - \Omega_i{\mathrm{v}}_i + \varepsilon_i,
            \label{v_i end}
        \end{split}
    \end{equation}
    with \(\Omega_i = \min({\varrho}_i,\varpi_iC_i^{-1})/\bar{\rho}_i\), \(\bar{\rho} = \max{(\rho_i,1)}\), \(\varepsilon_i = 2_i \varpi_i \|L_i-L_i^0\|_{\hat{L}_i}^{2}\), \(\varpi_i = \gamma\sigma_i \min((1-\beta_i),1/(4\beta_i))\). Now, by considering the procedure for the joint subsystems as in (\ref{Va})-(\ref{DVa}), defining the total Lyapunov function as \(\mathrm{v}_T = \sum_1^n (\mathrm{V}_{ai} + \mathrm{v}_i)\), taking its time derivative and replacing from (\ref{v_i end}), one can obtain,
    \begin{equation}
        \begin{split}
            \Dot{\mathrm{v}}_T \leq - \upsilon \mathrm{v}_T+\varepsilon,
        \end{split}
    \end{equation}
    with \(\upsilon = \min(\Omega_i,{2k_{ai}}/{I_{mi}})\) and \(\varepsilon = \min(\varepsilon_i)\). Then, the following can be obtained,
\begin{equation}
\label{eq:vT_bound}
\mathrm v_T(t)
\;\le\;
\mathrm v_T(0)e^{-\upsilon t}
\;+\;
\frac{\varepsilon}{\upsilon}\left(1-e^{-\upsilon t}\right) \leq \mathrm v_T(0) + \frac{\varepsilon}{\upsilon}.
\end{equation}
Therefore, all closed-loop trajectories remain bounded for all $t\ge0$, and the closed-loop tracking and estimation errors are semi-globally uniformly ultimately bounded with respect to an ultimate bound proportional to $\varepsilon/\upsilon$. Additionally, by defining the following according to (\ref{d_h lam}),
\begin{equation}
    \phi(\lambda) := -\log \lambda + \lambda - 1, \qquad \lambda>0,
    \label{phi}
\end{equation}
the following can be shown as a result of (\ref{eq:vT_bound}) and (\ref{vi}),
\begin{equation}
\label{eq:lambda_max_final}
\lambda_{\max}(\hat L_i^{-1} L_i) \;\le\; \overline{\lambda} = \phi^{-1}(\mathrm v_T(0) + \frac{\varepsilon}{\upsilon}),
\end{equation}
demonstrating the boundedness of the parameter estimation error on the manifold. Furthermore, following the same procedure in the last section, the upper bound for the \(\sup_{t>0}||\mathscr{V}_{i}||\) can be obtained. It must be noted that by proper initialization that satisfies \(\mathrm v_T(0)\leq -\frac{\varepsilon}{\upsilon} + \lambda_{min}(K_{z_i}) \pi^2/2\), the local configurational error \(\eta_i\) will remain in non-singular region with \(trace(R_{e_i}) \geq -1\).
\end{pf}

\begin{figure}[pos=t]
    \centering
    \includegraphics[width=0.5\linewidth]{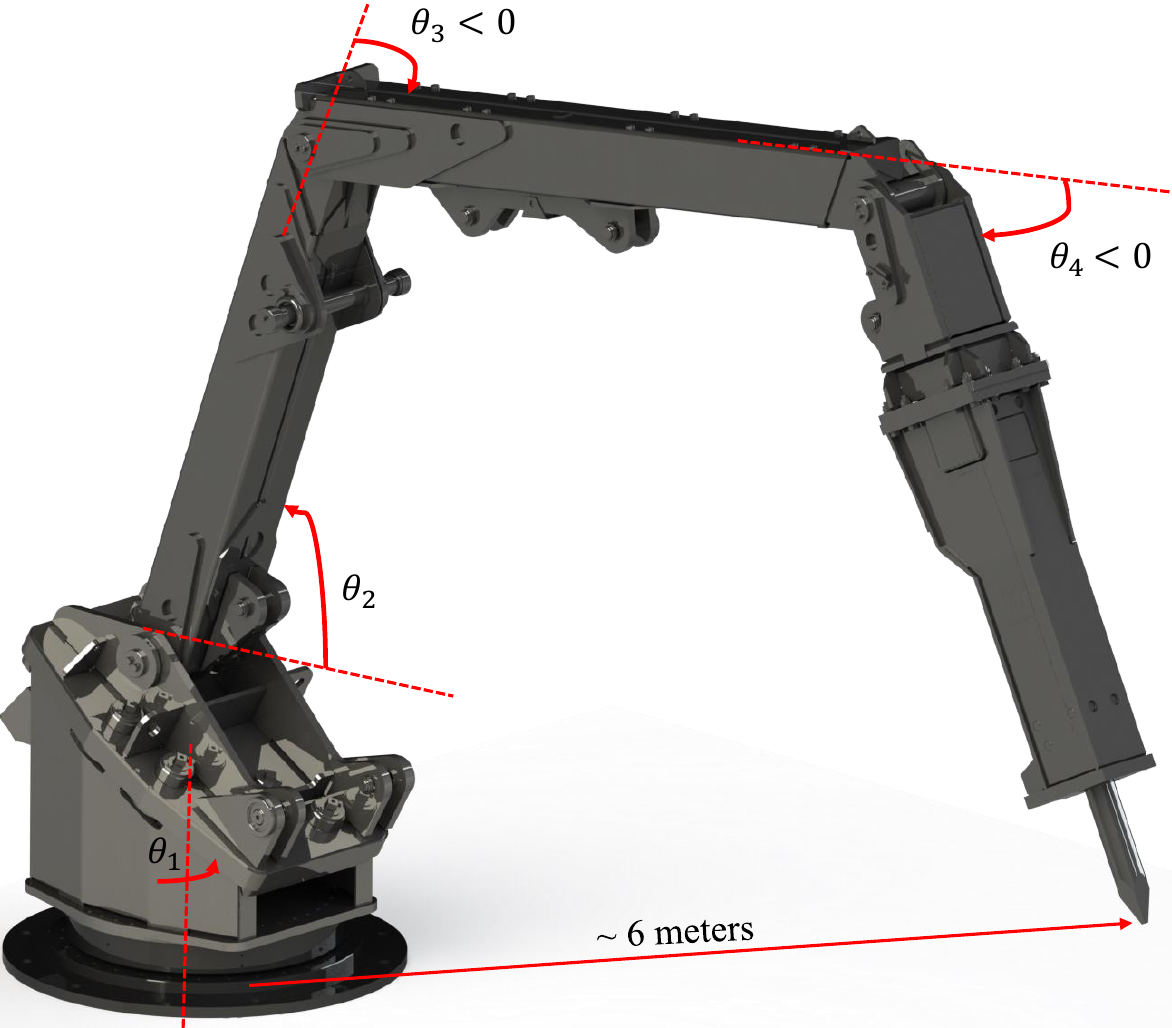}
    \caption{4R redundant heavy-duty manipulator scheme employed in simulations. The total mass is $\approx 9350~\mathrm{kg}$, reach $\approx 7~\mathrm{m}$}
    \label{Render}
\end{figure}

\section{Results}
\label{Results}
In this section, numerical simulations are conducted on a 4R redundant heavy-duty manipulator (total mass $\approx 9350~\mathrm{kg}$, reach $\approx 7~\mathrm{m}$; see Fig.~\ref{Render}) in order to evaluate the proposed control framework. The kinematic redundancy is exploited to regulate the end-effector orientation. The manipulator model is implemented in MATLAB/Simulink with a sampling frequency of $1~\mathrm{kHz}$. The proposed modular geometric controller (MGC) (\ref{Fri}) is compared against the modular controller of~\cite{humaloja2021decentralized} (VDC) and the geometric impedance controller of~\cite[equation~(32)]{seo2023geometric} (GIC). Furthermore, the adaptive extension of the proposed method (AMGC) (\ref{Fr_ad}) is compared with the MGC under parametric uncertainty to showcase the performance of the proposed adaptive control. All simulations are performed using the actual system parameters so as to preserve fidelity with respect to the real robotic platform. The use of an extreme-scale manipulator, characterized by substantial inertia and a long kinematic reach, is intentional: such a system provides a demanding test environment in which dynamic coupling effects are pronounced and even modest relative tracking errors may correspond to significant absolute deviations at the end-effector.

The gains of the proposed controllers are selected as \(
\Gamma_i = \mathrm{diag}(30,30,30,20,0.1,10), \,
K_v = 2000\,\cdot I_6, \,
\gamma = 10^4, \, \sigma = 0.1.\)
The gains of all benchmark controllers were manually tuned to provide competitive tracking performance under comparable control effort. For the comparison of the non-adaptive controllers, the initial joint configuration is chosen as \(
\boldsymbol{\theta}(0)=[-10,\;70,\;-90,\;-70]^\top \mathrm{deg}, 
\,
\dot{\boldsymbol{\theta}}(0)=\boldsymbol{0},\)
so as to induce a sufficiently large initial configuration error. This choice is intended to make the transient response clearly visible and to better reveal the effect of the non-Euclidean geometric formulation during large-motion evolution on the manifold. For the adaptive-control study, the initial configuration is chosen to coincide with the desired one, while a $\pm 10\%$ uncertainty is introduced into the rigid-body densities in Simscape. This parametric perturbation also provides a meaningful test of the robustness of the proposed controller.

In all the simulations, the desired end-effector position trajectory is prescribed as,
\[
\mathcal{P}^d =
\begin{bmatrix}
x^d\\
y^d\\
z^d
\end{bmatrix}
=
R_x(\pi/6)
\begin{bmatrix}
5 + 0.5\cos(2\pi t/T)\\
0\\
-0.5 + 0.5\cos(2\pi t/T)
\end{bmatrix},
\]
which corresponds to a tilted circular trajectory in Cartesian space with period $T=60~\mathrm{s}$. In addition, the end-effector orientation, denoted by $\zeta$, is prescribed such that the end-effector remains perpendicular to the $xy$-plane, namely $\zeta_d=-\pi/2$. Further, the end-effector position and orientation error shown in the figures are defined as \(\mathbf{e}^p(t) = \mathcal{P}^d(t)-\mathcal{P}(t)\) and \(e^o(t) = \zeta^d(t)-\zeta(t)\), respectively. Without loss of generality, the effect of the joint dynamics is neglected in this simulation study. Since the GIC benchmark does not explicitly account for joint dynamics, the joint-level gain is set to $k_a=0$ in order to ensure a fair comparison at the level of rigid-body dynamics alone. This assumption is justified by the comparatively small influence of the joint dynamics relative to the dominant rigid-body dynamics of the considered heavy-duty system. Moreover, if the proposed approach demonstrates superior performance even in the absence of the joint module, then incorporating the joint-level control action is expected only to further improve the closed-loop response.

\subsection{Performance Comparison}
Fig.~\ref{Path_track}(a)--(c) illustrate the position-tracking performance of the considered controllers, whereas Fig.~\ref{Path_track}(d) reports the corresponding end-effector orientation regulation. It is observed that both geometric controllers, namely MGC and GIC, outperform the Euclidean-space-based modular controller during the transient phase and exhibit smoother convergence toward the desired path. The proposed MGC, however, demonstrates superior overall performance, particularly in the orientation response shown in Fig.~\ref{Path_track}(d), where it achieves smooth convergence without noticeable oscillation in the vicinity of the desired equilibrium. To provide a more intuitive illustration of the tracking behavior, Fig.~\ref{circle} shows the end-effector trajectories generated by the different controllers, while Fig.~\ref{fig:controller_snaps} presents snapshots of the corresponding robot motion. As seen from these results, both geometric methods converge smoothly to the desired path, with the proposed MGC exhibiting the fastest transient response with accurate orientation regulation.

\begin{figure}[pos=t]
    \centering
    \includegraphics[width=0.7\linewidth]{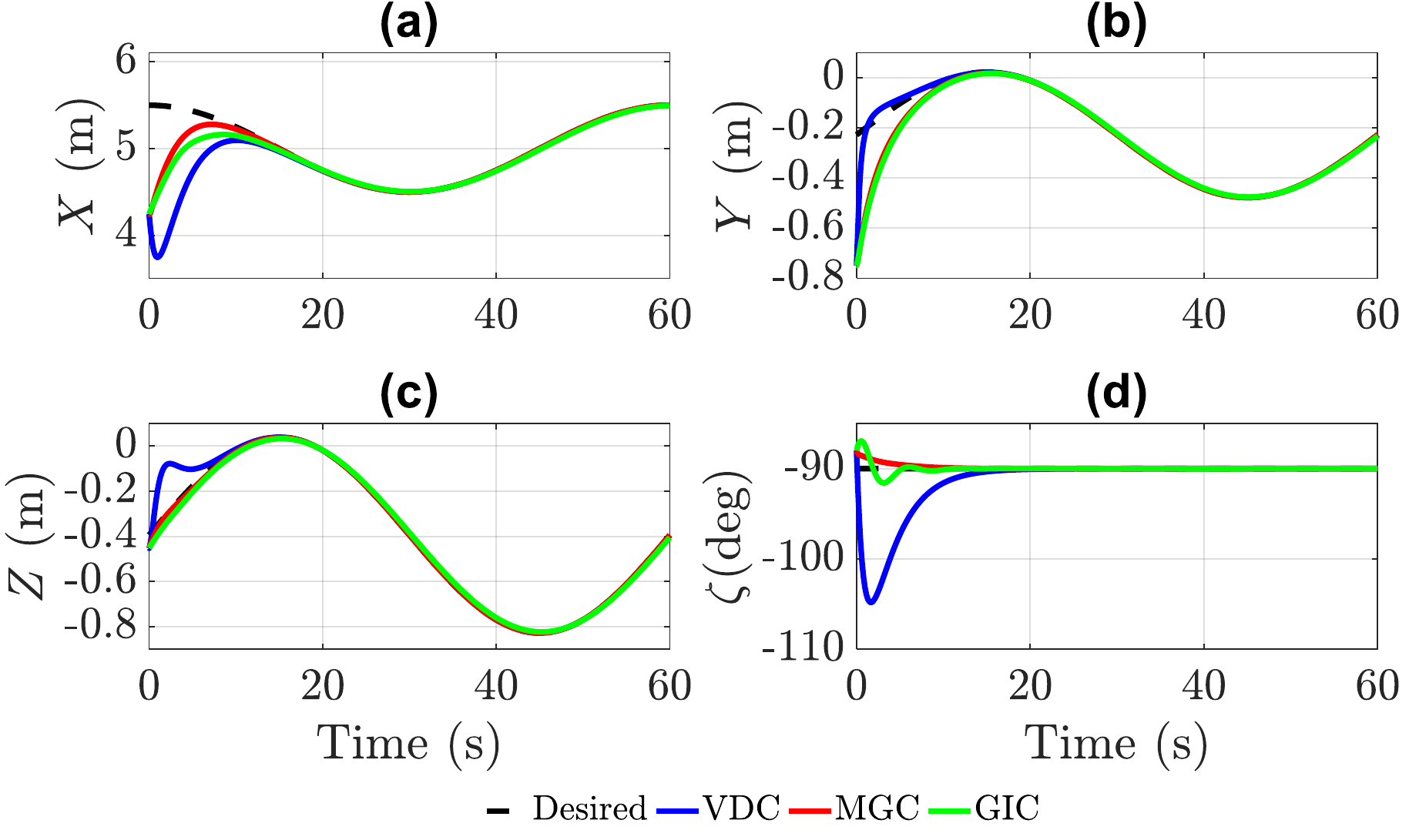}
    \caption{End-effector tracking performance under the considered controllers. (a)--(c) report the Cartesian position responses in the $x$-, $y$-, and $z$-directions, respectively, while (d) shows the corresponding end-effector orientation regulation.}
    \label{Path_track}
\end{figure}
\begin{figure}[pos=t]
    \centering
    \includegraphics[width=0.7\linewidth]{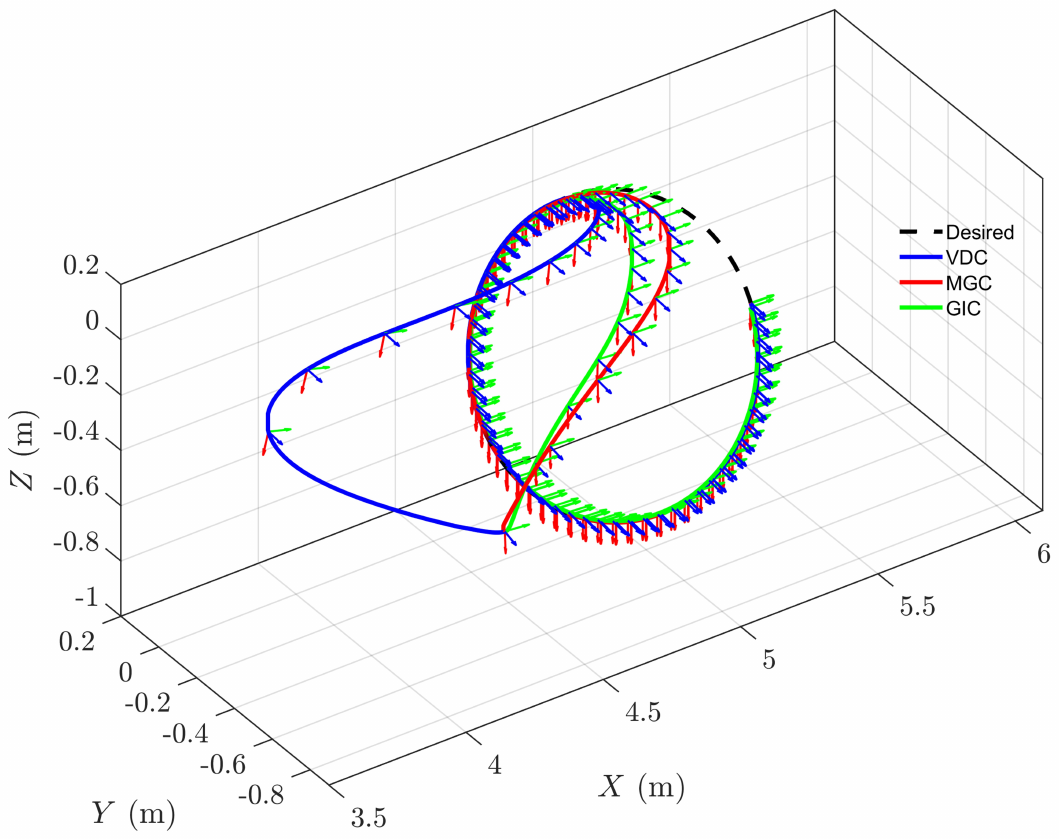}
    \caption{Three-dimensional end-effector trajectories generated by the considered controllers, together with the attached end-effector coordinate frames. The figure illustrates the transient convergence and steady-state tracking behavior along the prescribed spatial path.}
    \label{circle}
\end{figure}

\begin{figure}[pos=t]
    \centering
    \subfloat[VDC\label{fig:VDC}]{
        \includegraphics[width=0.32\linewidth]{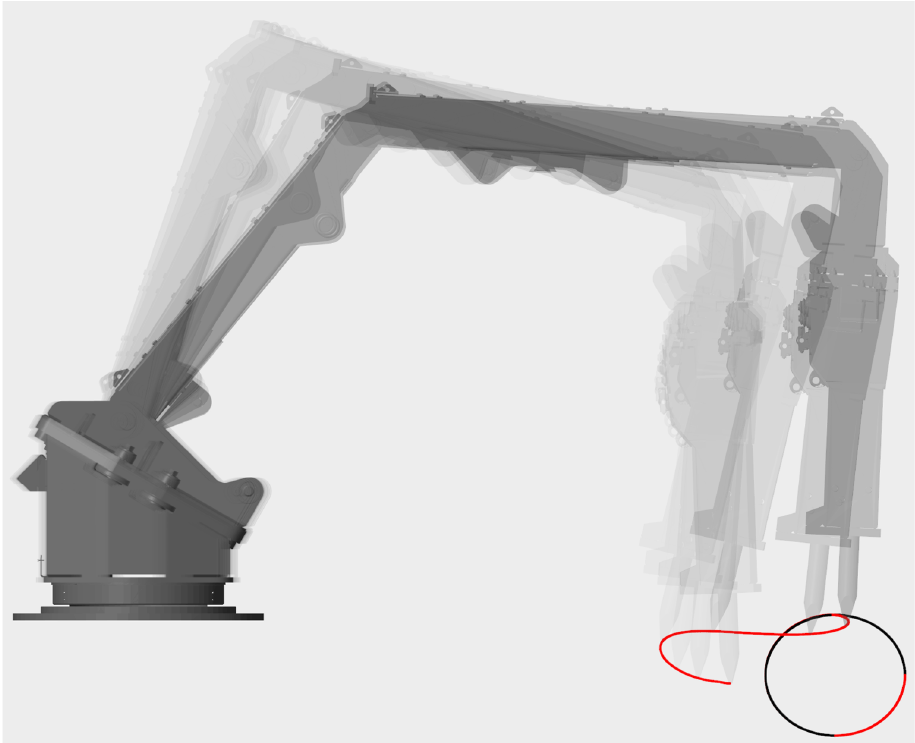}
    }\hfill
    \subfloat[GIC\label{fig:GIC}]{
        \includegraphics[width=0.32\linewidth]{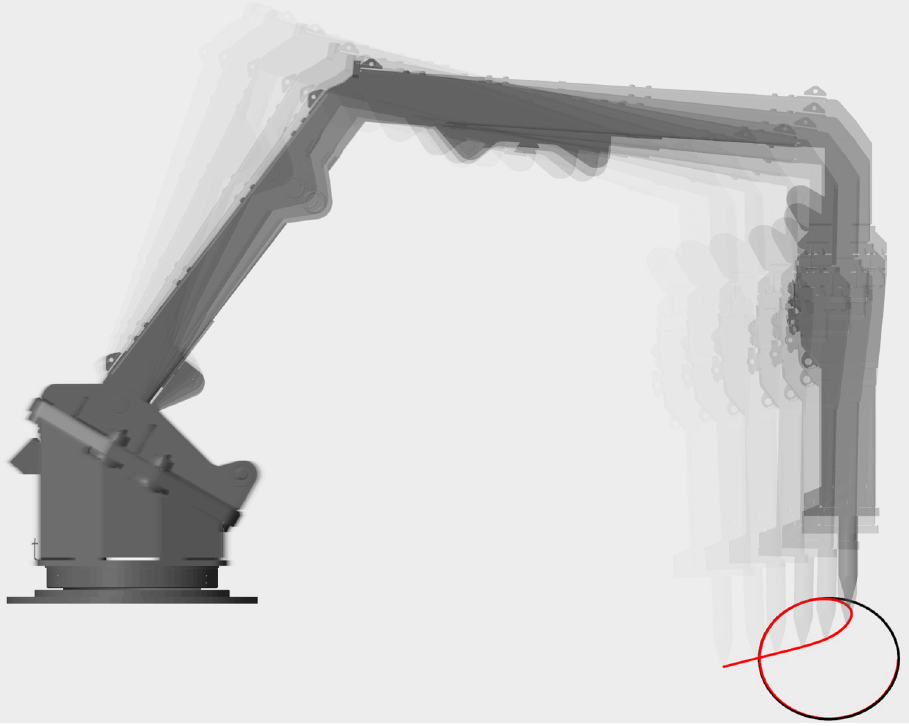}
    }\hfill
    \subfloat[MGC\label{fig:MGC}]{
        \includegraphics[width=0.32\linewidth]{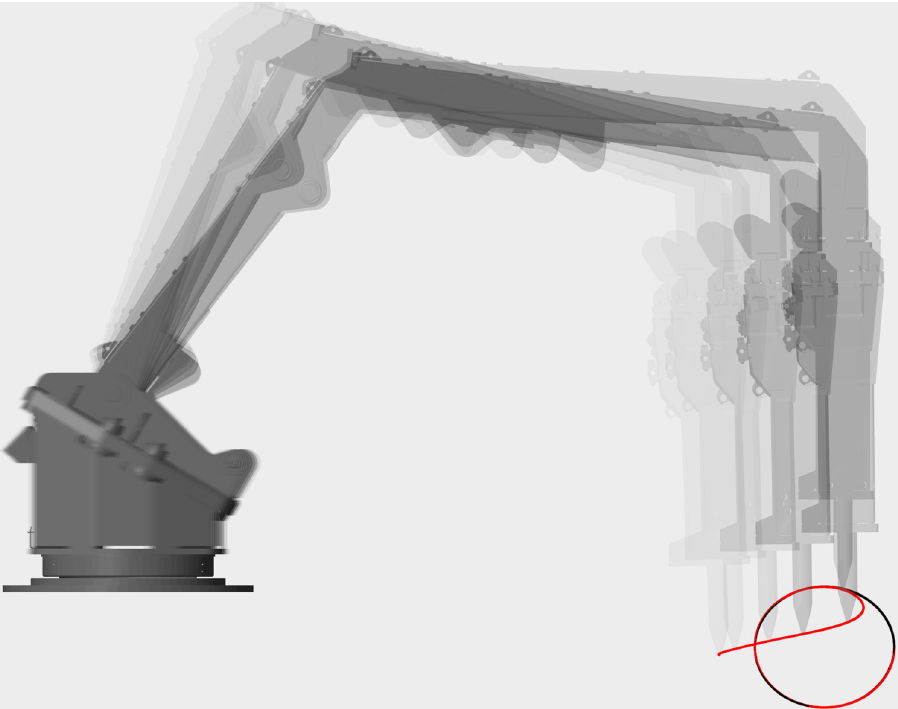}
    }
    \caption{Snapshots of the manipulator motion under the considered non-adaptive controllers.}
    \label{fig:controller_snaps}
\end{figure}

\begin{figure}[pos=t]
    \centering
    \includegraphics[width=0.7\linewidth]{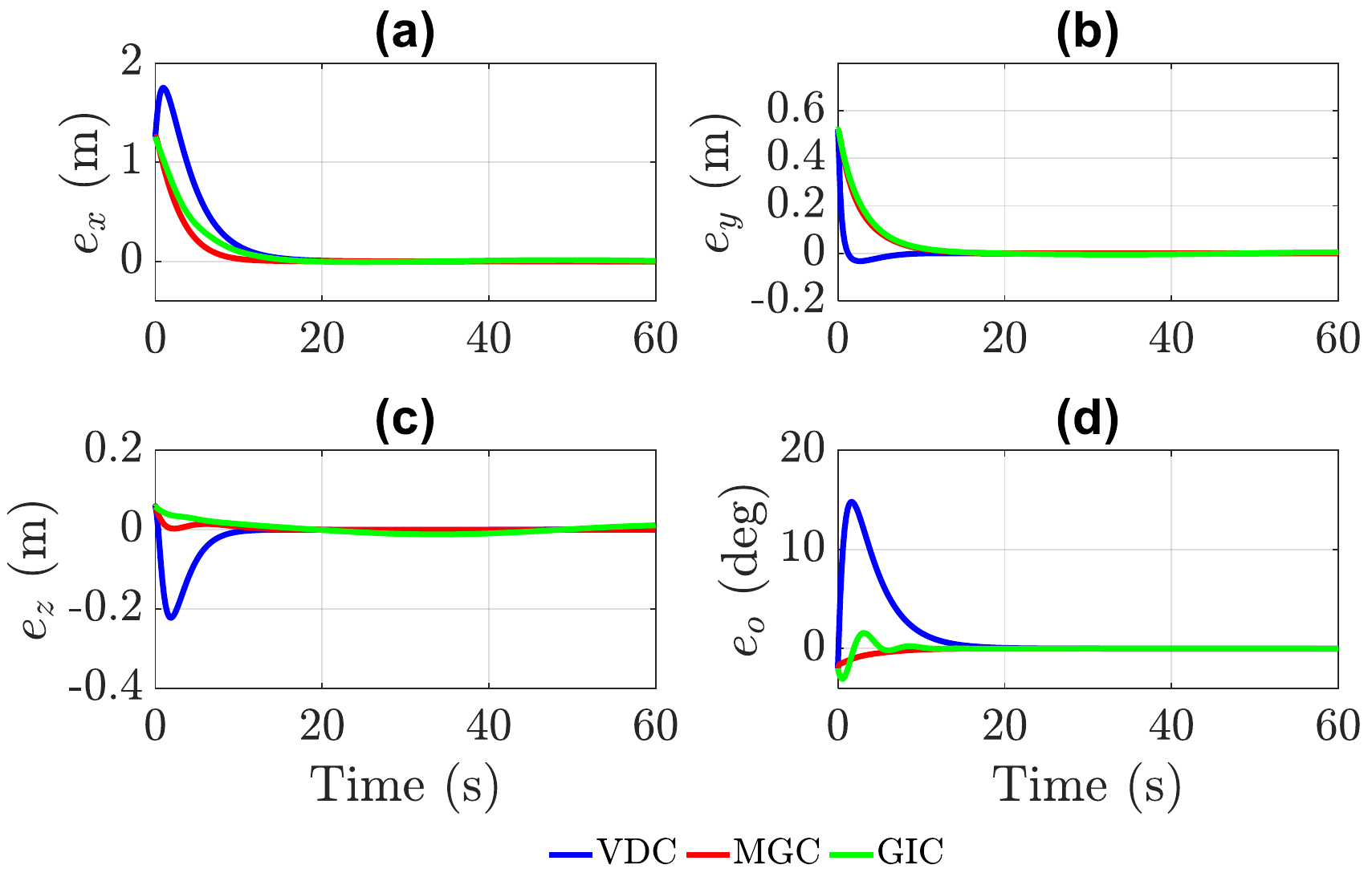}
    \caption{Time histories of the end-effector tracking errors. (a)--(c) correspond to the Cartesian position errors in the $x$-, $y$-, and $z$-directions, respectively, and (d) shows the orientation error.}
    \label{tracking_error}
\end{figure}

\begin{figure}[pos=t]
    \centering
    \includegraphics[width=0.8\linewidth]{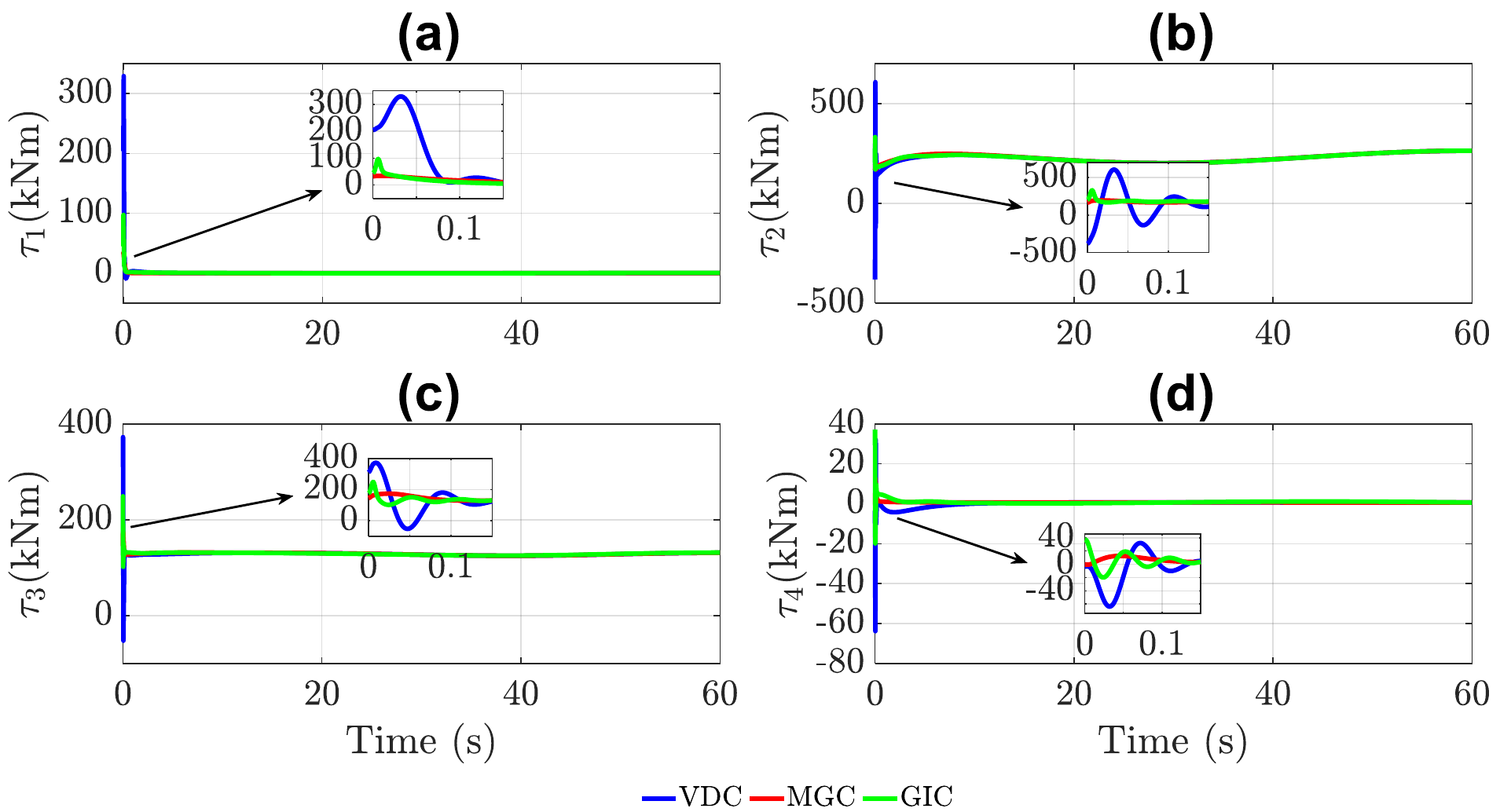}
    \caption{Joint control torques produced by the considered controllers. (a)--(d) correspond to joints 1--4, respectively. The results indicate that the proposed MGC yields smoother torque profiles while requiring control effort comparable to, or lower than, the benchmark methods.}
    \label{torques}
\end{figure}

Fig.~\ref{tracking_error} further presents the time histories of the tracking errors. The results indicate that the proposed controller not only improves substantially upon the Euclidean-space-based modular controller, but also achieves improved orientation regulation relative to its geometric benchmark counterpart. Notably, this performance improvement is obtained without requiring additional control effort. Indeed, Fig.~\ref{torques} shows that the proposed controller demands less control input than VDC and a control effort comparable to that of GIC. Collectively, these results demonstrate that the proposed method achieves faster and smoother convergence than the considered state-of-the-art benchmark controllers while maintaining a comparable, or lower, level of control effort.

\subsection{Adaptive Control Performance}
The tracking performance of the AMGC and MGC controllers in the presence of parametric uncertainty is shown in Figs.~\ref{er_10m}--\ref{fig:L_hat}. Figs~\ref{er_10m}(a)--(c) present the position tracking errors, while Fig.~\ref{er_10m}(d) shows the orientation regulation performance under $-10\%$ uncertainty. Likewise, Fig.~\ref{er_10p} illustrates the corresponding tracking performance under $+10\%$ uncertainty. The results show that AMGC provides a clear performance improvement over MGC in the presence of parametric uncertainty. Given the scale of the manipulator, a $\pm 10\%$ variation in the density of the rigid bodies constitutes a severe perturbation and thus defines a demanding evaluation scenario. Nevertheless, while AMGC achieves superior tracking performance by effectively compensating for these uncertainties, MGC still preserves bounded closed-loop behavior and maintains system stability. Furthermore, Figs.~\ref{L_hat_10m} and~\ref{L_hat_10p} show the time histories of the norms of the estimated parameters over a longer simulation horizon ($T=120~\mathrm{s}$), in order to better illustrate the boundedness and convergence properties of the adaptive signals. Taken together, these results validate the theoretical claims regarding stability and tracking performance.

\begin{figure}[pos=t]
    \centering
    \includegraphics[width=0.7\linewidth]{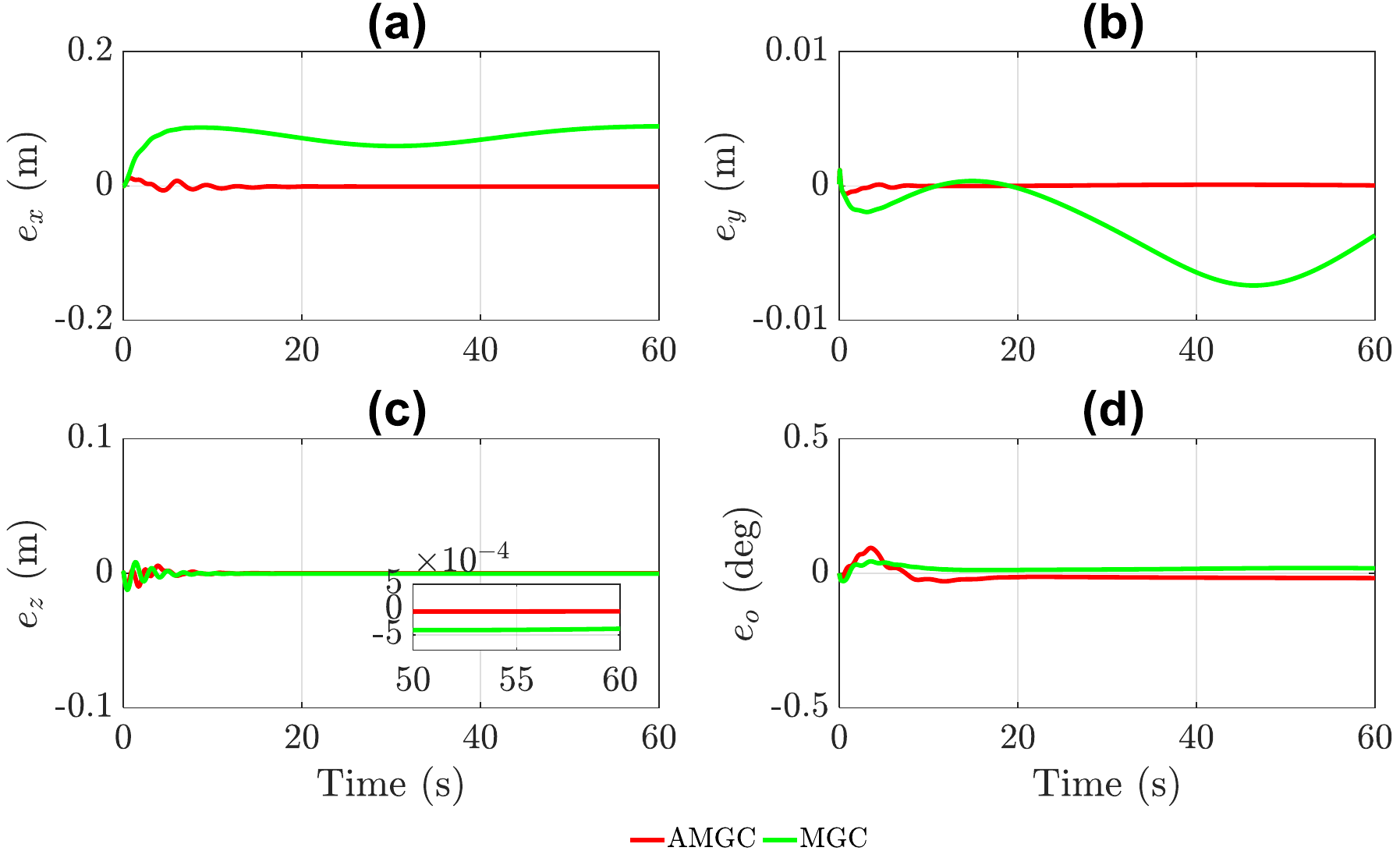}
    \caption{Comparison of the tracking performance of the MGC and AMGC schemes under $-10\%$ parametric uncertainty.}
    \label{er_10m}
\end{figure}

\begin{figure}[pos=t]
    \centering
    \includegraphics[width=0.7\linewidth]{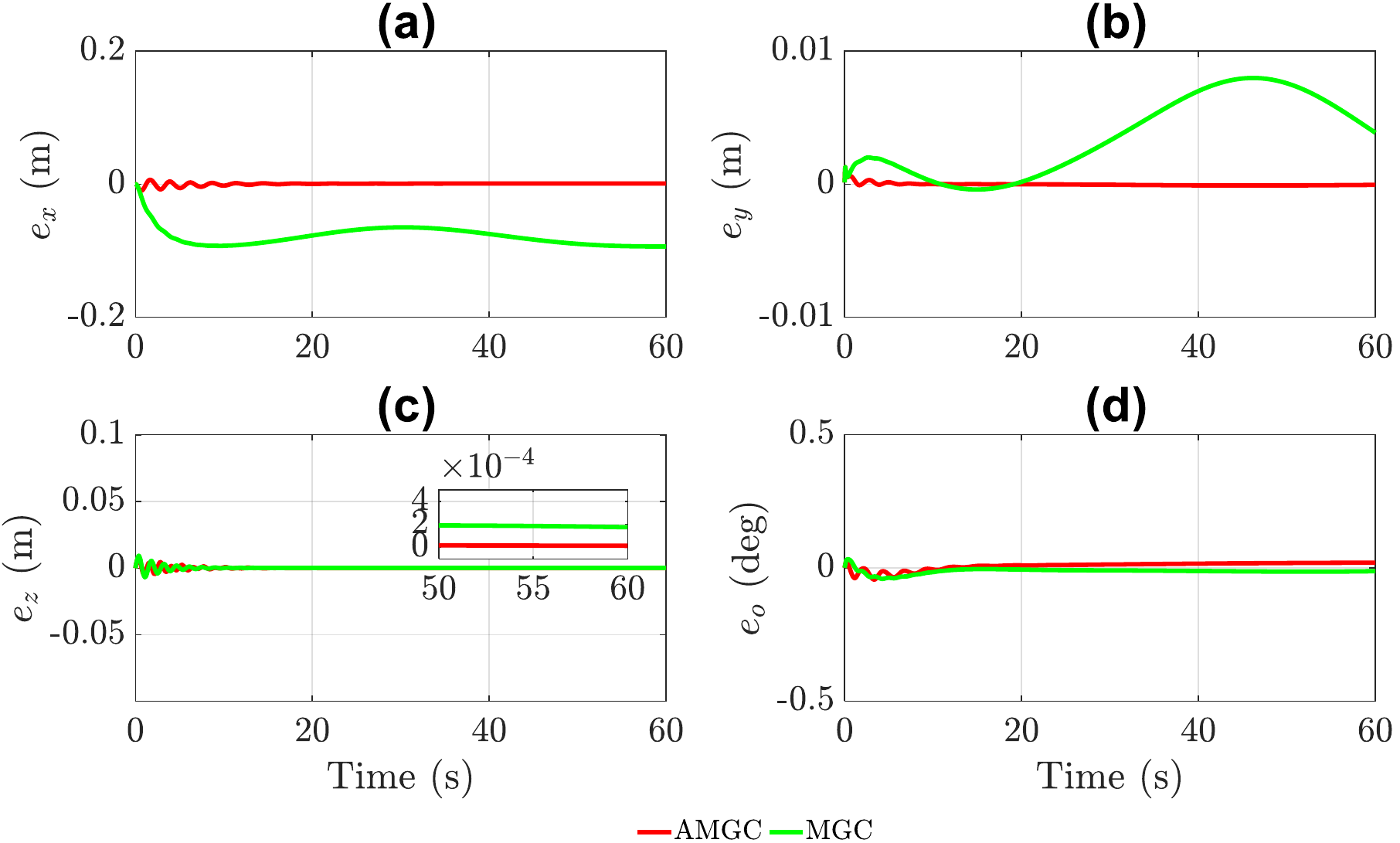}
    \caption{Comparison of the tracking performance of the MGC and AMGC schemes under $+10\%$ parametric uncertainty.}
    \label{er_10p}
\end{figure}

\begin{figure}[pos=t]
    \centering
    \subfloat[\label{L_hat_10m}]{
        \includegraphics[width=0.45\linewidth]{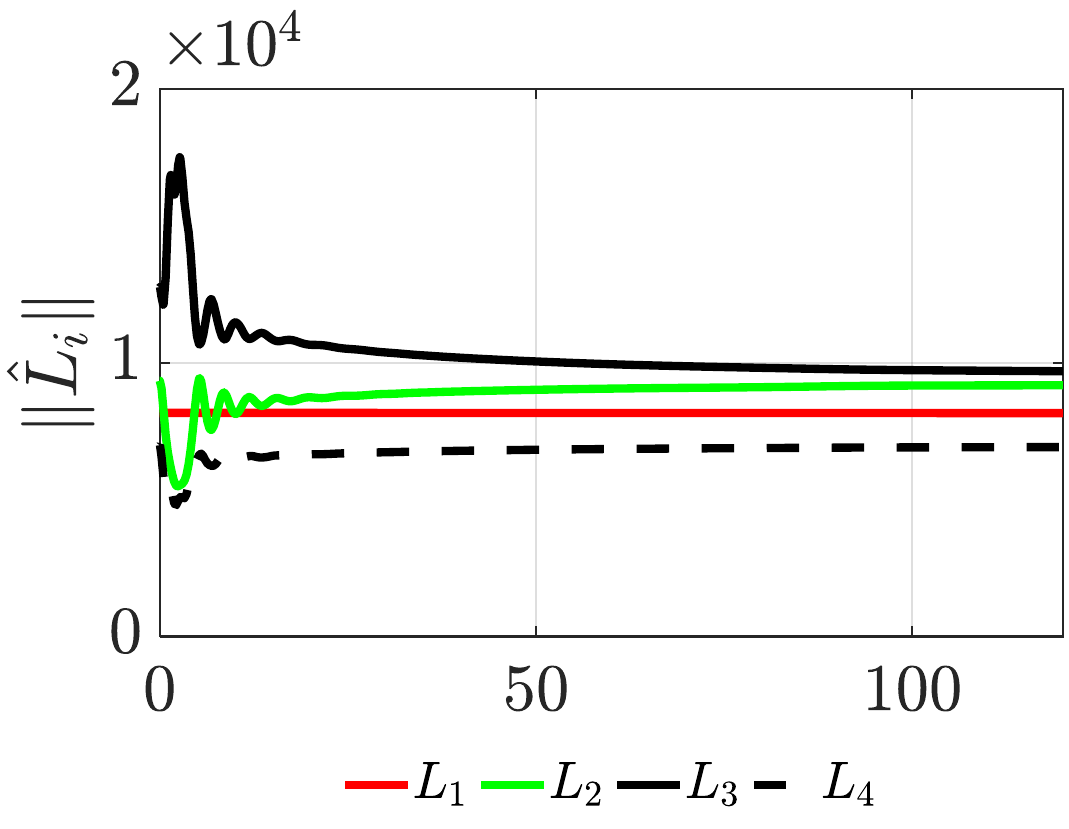}
    }
    \hfill
    \subfloat[\label{L_hat_10p}]{
        \includegraphics[width=0.45\linewidth]{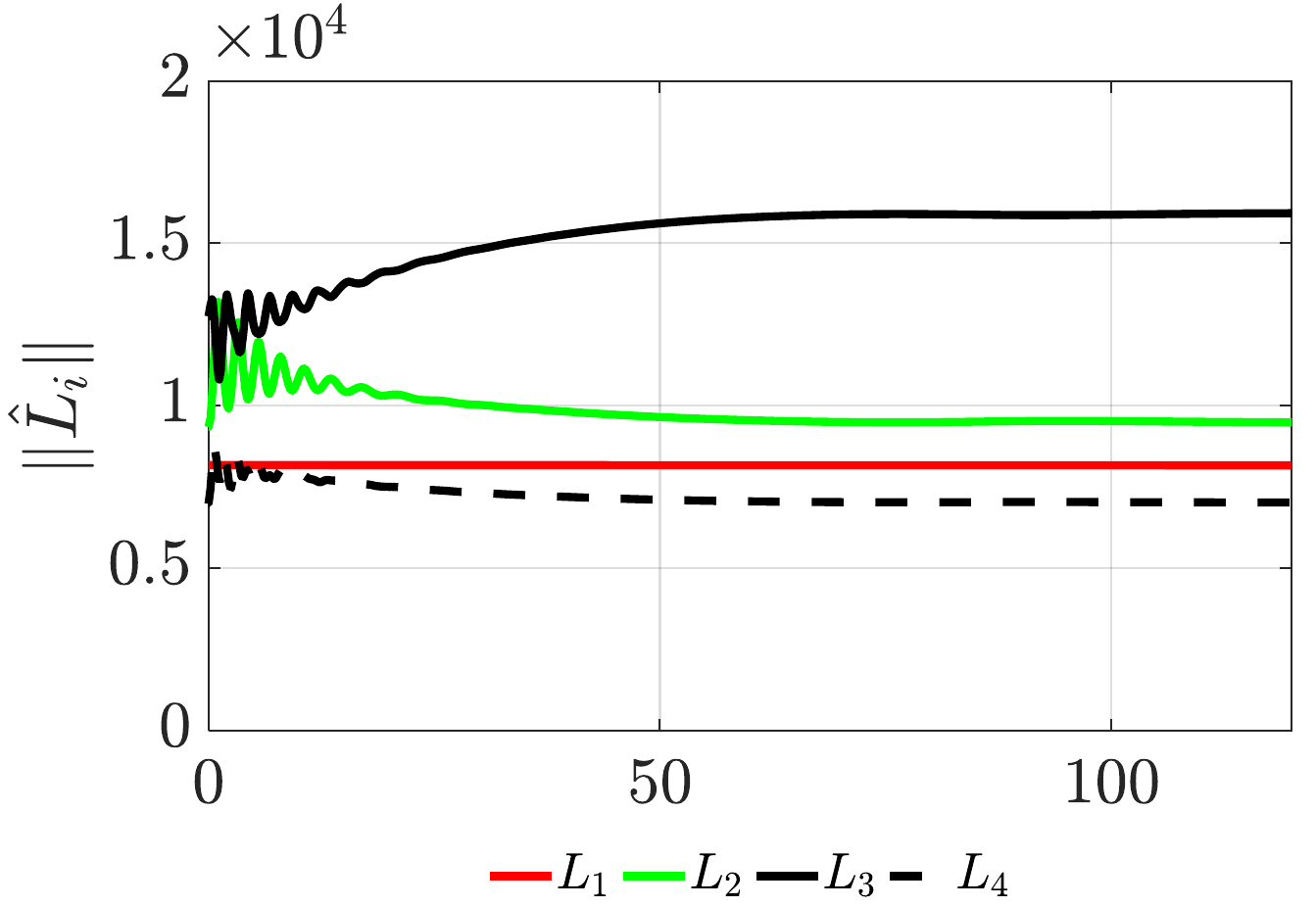}
    }
    \caption{Time histories of the estimated parameters under a) $+10\%$ parametric uncertainty and b) $-10\%$ parametric uncertainty. Each curve represents the norm of the corresponding parameter estimate.}
    \label{fig:L_hat}
\end{figure}

\subsection{Quantitative Analysis}
Finally, to provide a more comprehensive assessment of controller performance, both tracking accuracy and control effort are quantified. The adopted performance measures for position tracking include the RMS and maximum norms of the position error, $\|\mathbf{e}^p(t)\|$, denoted by $e_{\mathrm{rms}}^p$ and $e_{\max}^p$, respectively, together with the steady-state error $e_{\mathrm{ss}}^p$ and the integrated absolute error $e_{\mathrm{IAE}}^p$. The orientation performance is evaluated by the RMS and maximum orientation errors, denoted by $e_{\mathrm{rms}}^o$ and $e_{\max}^o$, respectively. In addition, the control effort is characterized by the overall RMS torque, $\tau_{\mathrm{rms}}$, computed over all joints.

\begin{table*}[t]
\centering
\caption{Performance comparison metrics}
\label{tab:controller_metrics}
\begin{threeparttable}
\setlength{\tabcolsep}{6pt}
\renewcommand{\arraystretch}{1.18}
\begin{adjustbox}{width=\textwidth}
\begin{tabular}{llcccccccc}
\toprule
\rowcolor{gray!15}
\textbf{Case} & \textbf{Controller} & $\mathbf{e}_{\mathrm{rms}}^p$ & $\mathbf{e}_{\max}^p$ & $\mathbf{e}_{\mathrm{ss}}^p$ & $\mathbf{e}_{\mathrm{IAE}}^p$ & $e_{\mathrm{rms}}^o$ & $e_{\max}^o$ & $\tau_{\mathrm{rms}}$ \\
\midrule

\multirow{3}{*}{\makecell[l]{Non-adaptive}}
& MGC & 0.2199 & 1.3628 & 0.0001 & 3.9828 & 0.2933 & 2.0000 & 263.2088 \\
& VDC \cite{humaloja2021decentralized} & 0.4131 & 1.7584 & 0.0015 & 9.0244 & 3.5395 & 14.8169 & 261.8431 \\
& GIC \cite{seo2023geometric} & 0.2505 & 1.3599 & 0.0132 & 5.7933 & 0.4791 & 3.0584 & 262.2216\\
\midrule

\multirow{2}{*}{\makecell[l]{Adaptive case\\ $-10\%$ uncertainty}}
& AMGC & 0.0029 & 0.0138 & 0.0011 & 0.1094 & 0.0247 & 0.0942 & 240.1292\\
& MGC  & 0.0747 & 0.0889 & 0.0889 & 4.4125 & 0.0191 & 0.0440 & 236.4682\\
\midrule

\multirow{2}{*}{\makecell[l]{Adaptive case\\ $+10\%$ uncertainty}}
& AMGC & 0.0021 & 0.0109 & 0.0012 & 0.0858 & 0.0158 & 0.0443 & 293.4815\\
& MGC  & 0.0801 & 0.0937 & 0.0937 & 4.7299 & 0.0156 & 0.0415 & 298.3060\\
\bottomrule
\end{tabular}
\end{adjustbox}
\begin{tablenotes}[flushleft]
\footnotesize
\item The units of the reported metrics are degrees for orientation, meters for position, and kN$\cdot$m for torque.
\end{tablenotes}
\end{threeparttable}
\end{table*}

The quantitative results of all simulation studies are summarized in Table~\ref{tab:controller_metrics}. For the non-adaptive case, the geometric controllers exhibit clear performance advantages over VDC across the reported metrics. This is particularly evident in $e_{\max}^p$ and $e_{\max}^o$, where VDC shows larger transient errors due to its pronounced initial overshoot. Among all non-adaptive methods, the proposed MGC achieves the best overall performance in both position and orientation regulation. More specifically, MGC reduces the RMS position and orientation errors by $46.78\%$ and $91.7\%$, respectively, compared with VDC, while the corresponding improvements over GIC are $12.21\%$ in position and $38.77\%$ in orientation. Likewise, the integral absolute position error $e_{\mathrm{IAE}}^p$ is reduced by $55.86\%$ relative to VDC and by $31.25\%$ relative to GIC. These improvements are obtained under comparable overall actuation levels, although VDC exhibits larger initial torque peaks, as also seen in Fig.~\ref{torques}.

Table~\ref{tab:controller_metrics} also reports the quantitative comparison between AMGC and MGC under parametric uncertainty. In both uncertainty cases, AMGC provides a clear improvement by effectively compensating for modeling errors. For instance, it achieves a steady-state position error $e_{\mathrm{ss}}^p$ on the order of $1~\mathrm{mm}$, indicating strong uncertainty-rejection capability and high final accuracy. However, MGC also preserves closed-loop stability and maintains satisfactory orientation regulation relative to AMGC, thereby highlighting the inherent robustness of the overall control architecture. Taken together, these results provide consistent numerical evidence for the stability and performance advantages of the proposed method.

\section{Conclusion}
\label{Conclusion}
This paper proposed an adaptive modular geometric control framework for robotic manipulators evolving on \(SE(3)\). The manipulator was decomposed into rigid-body and joint modules, with each rigid body described through an Euler--Poincaré-type spatial dynamics on \(\mathfrak{se}(3)\). Logarithmic configuration errors, adjoint-based body velocity errors, and screw-induced joint constraints were used to construct the geometric controller compatible with recursive twist, acceleration, and wrench propagation. The interconnection among modules was characterized through local power-pairing terms induced by the natural duality between \(\mathfrak{se}(3)\) and \(\mathfrak{se}(3)^*\). These virtual effort--flow terms cancel at the system level, enabling the module-level Lyapunov inequalities to establish exponential tracking stability of the complete nominal manipulator. The framework was further extended to uncertain dynamics through a geometric adaptation law on the manifold of symmetric positive-definite matrices \(\mathscr{P}(4)\), which preserves physical consistency of the estimates and yields semi-global uniform ultimate boundedness under parametric uncertainty. Simulations on a redundant high-inertia 4R manipulator demonstrated accurate tracking, smooth transients, and robustness to inertial uncertainty. In the non-adaptive case, the proposed controller reduced the RMS position error by \(46.78\%\) relative to the modular benchmark and by \(12.21\%\) relative to the geometric impedance benchmark, while maintaining comparable control effort. Future work will focus on experimental validation and extension to contact-rich and hybrid manipulation tasks.

\bibliographystyle{elsarticle-num}
\bibliography{bibl}

@techreport{bullo_murray_1995,
  author       = {Bullo, Francesco and Murray, Richard M.},
  title        = {Proportional Derivative (PD) Control on the Euclidean Group},
  institution  = {California Institute of Technology},
  year         = {1995},
  type         = {CDC Technical Report 95-010},
  note         = {Available electronically via \url{https://www.cds.caltech.edu/~murray/preprints/cds95-010.pdf}}
}

@book{murray2017mathematical,
  title={A mathematical introduction to robotic manipulation},
  author={Murray, Richard M and Li, Zexiang and Sastry, S Shankar},
  year={2017},
  publisher={CRC press}
}

@book{featherstone2008rigid,
  title={Rigid body dynamics algorithms},
  author={Featherstone, Roy},
  year={2008},
  publisher={Springer}
}

@article{park1995lie,
  title={A Lie group formulation of robot dynamics},
  author={Park, Frank C and Bobrow, James E and Ploen, Scott R},
  journal={The International journal of robotics research},
  volume={14},
  number={6},
  pages={609--618},
  year={1995},
  publisher={Sage Publications Sage CA: Thousand Oaks, CA}
}

@inproceedings{lee2018natural,
  title={A natural adaptive control law for robot manipulators},
  author={Lee, Taeyoon and Kwon, Jaewoon and Park, Frank C},
  booktitle={2018 IEEE/RSJ International Conference on Intelligent Robots and Systems (IROS)},
  pages={1--9},
  year={2018},
  organization={IEEE}
}

@book{zhu2010virtual,
  title={Virtual decomposition control: toward hyper degrees of freedom robots},
  author={Zhu, Wen-Hong},
  volume={60},
  year={2010},
  publisher={Springer Science \& Business Media}
}

@inproceedings{lee2010geometric,
  title={Geometric tracking control of a quadrotor UAV on SE (3)},
  author={Lee, Taeyoung and Leok, Melvin and McClamroch, N Harris},
  booktitle={49th IEEE conference on decision and control (CDC)},
  pages={5420--5425},
  year={2010},
  organization={IEEE}
}

@article{seo2023geometric,
  title={Geometric impedance control on SE (3) for robotic manipulators},
  author={Seo, Joohwan and Prakash, Nikhil Potu Surya and Rose, Alexander and Choi, Jongeun and Horowitz, Roberto},
  journal={IFAC-PapersOnLine},
  volume={56},
  number={2},
  pages={276--283},
  year={2023},
  publisher={Elsevier}
}

@article{seo2023contact,
  title={Contact-rich se (3)-equivariant robot manipulation task learning via geometric impedance control},
  author={Seo, Joohwan and Prakash, Nikhil PS and Zhang, Xiang and Wang, Changhao and Choi, Jongeun and Tomizuka, Masayoshi and Horowitz, Roberto},
  journal={IEEE Robotics and Automation Letters},
  volume={9},
  number={2},
  pages={1508--1515},
  year={2023},
  publisher={IEEE}
}

@article{lee2018geometric,
  title={A geometric algorithm for robust multibody inertial parameter identification},
  author={Lee, Taeyoon and Park, Frank C},
  journal={IEEE Robotics and Automation Letters},
  volume={3},
  number={3},
  pages={2455--2462},
  year={2018},
  publisher={IEEE}
}

@article{lee2019geometric,
  title={Geometric robot dynamic identification: A convex programming approach},
  author={Lee, Taeyoon and Wensing, Patrick M and Park, Frank C},
  journal={IEEE Transactions on Robotics},
  volume={36},
  number={2},
  pages={348--365},
  year={2019},
  publisher={IEEE}
}

@phdthesis{2019geometric,
  title={Geometric Methods for Dynamic Model-Based Identification and Control of Multibody Systems},
  author={TAEYOON LEE},
  year={2019},
  school={Department of mechanical and Aerospace Engineering, Seoul National University}
}

@article{humaloja2021decentralized,
  title={Decentralized observer design for virtual decomposition control},
  author={Humaloja, Jukka-Pekka and Koivum{\"a}ki, Janne and Paunonen, Lassi and Mattila, Jouni},
  journal={IEEE Transactions on Automatic Control},
  volume={67},
  number={5},
  pages={2529--2536},
  year={2021},
  publisher={IEEE}
}

@article{alcan2025constrained,
  title={Constrained trajectory optimization on matrix lie groups via lie-algebraic differential dynamic programming},
  author={Alcan, Gokhan and Abu-Dakka, Fares J and Kyrki, Ville},
  journal={Systems \& Control Letters},
  volume={204},
  pages={106220},
  year={2025},
  publisher={Elsevier}
}

@inproceedings{teng2022lie,
  title={Lie algebraic cost function design for control on Lie groups},
  author={Teng, Sangli and Clark, William and Bloch, Anthony and Vasudevan, Ram and Ghaffari, Maani},
  booktitle={2022 IEEE 61st Conference on Decision and Control (CDC)},
  pages={1867--1874},
  year={2022},
  organization={IEEE}
}

@article{zhu2011virtual,
  title={Virtual decomposition control for modular robot manipulators},
  author={Zhu, Wen-Hong and Vukovich, George},
  journal={IFAC Proceedings Volumes},
  volume={44},
  number={1},
  pages={13486--13491},
  year={2011},
  publisher={Elsevier}
}

@article{koivumaki2022subsystem,
  title={Subsystem-based control with modularity for strict-feedback form nonlinear systems},
  author={Koivum{\"a}ki, Janne and Humaloja, Jukka-Pekka and Paunonen, Lassi and Zhu, Wen-Hong and Mattila, Jouni},
  journal={IEEE Transactions on Automatic Control},
  volume={68},
  number={7},
  pages={4336--4343},
  year={2022},
  publisher={IEEE}
}

@article{hejrati2025impact,
  title={Impact-resilient orchestrated robust controller for heavy-duty hydraulic manipulators},
  author={Hejrati, Mahdi and Mattila, Jouni},
  journal={IEEE/ASME Transactions on Mechatronics},
  year={2025},
  publisher={IEEE}
}

@article{seo2025se,
  title={SE (3)-Equivariant robot learning and control: a tutorial survey},
  author={Seo, Joohwan and Yoo, Soochul and Chang, Junwoo and An, Hyunseok and Ryu, Hyunwoo and Lee, Soomi and Kruthiventy, Arvind and Choi, Jongeun and Horowitz, Roberto},
  journal={International Journal of Control, Automation and Systems},
  volume={23},
  number={5},
  pages={1271--1306},
  year={2025},
  publisher={Springer}
}

@inproceedings{abu2020geometry,
  title={Geometry-aware dynamic movement primitives},
  author={Abu-Dakka, Fares J and Kyrki, Ville},
  booktitle={2020 IEEE International Conference on Robotics and Automation (ICRA)},
  pages={4421--4426},
  year={2020},
  organization={IEEE}
}

@article{saveriano2023learning,
  title={Learning stable robotic skills on Riemannian manifolds},
  author={Saveriano, Matteo and Abu-Dakka, Fares J and Kyrki, Ville},
  journal={Robotics and Autonomous Systems},
  volume={169},
  pages={104510},
  year={2023},
  publisher={Elsevier}
}

@article{nah2025modular,
  title={Modular robot control with motor primitives},
  author={Nah, Moses C and Lachner, Johannes and Hogan, Neville},
  journal={arXiv preprint arXiv:2505.10694},
  year={2025}
}

@inproceedings{hejrati2025desired,
  title={Desired Impedance Allocation for Robotic Manipulators},
  author={Hejrati, Mahdi and Mattila, Jouni},
  booktitle={2025 IEEE 64th Conference on Decision and Control (CDC)},
  pages={2634--2641},
  year={2025},
  organization={IEEE}
}

@article{barjini2025surrogate,
  title={Surrogate-Enhanced Modeling and Adaptive Modular Control of All-Electric Heavy-Duty Robotic Manipulators},
  author={Barjini, Amir Hossein and Bahari, Mohammad and Hejrati, Mahdi and Mattila, Jouni},
  journal={arXiv preprint arXiv:2508.06313},
  year={2025}
}

@article{hejrati2025robust,
  title={Robust Immersive Bilateral Teleoperation of Beyond-Human-Scale Systems with Enhanced Transparency and Sense of Embodiment},
  author={Hejrati, Mahdi and Mustalahti, Pauli and Mattila, Jouni},
  journal={arXiv preprint arXiv:2505.14486},
  year={2025}
}

@article{ding2022vdc,
  title={VDC-based admittance control of multi-DOF manipulators considering joint flexibility via hierarchical control framework},
  author={Ding, Liang and Xing, Hongjun and Gao, Haibo and Torabi, Ali and Li, Weihua and Tavakoli, Mahdi},
  journal={Control Engineering Practice},
  volume={124},
  pages={105186},
  year={2022},
  publisher={Elsevier}
}

@article{rashad2022energy,
  title={Energy aware impedance control of a flying end-effector in the port-Hamiltonian framework},
  author={Rashad, Ramy and Bicego, Davide and Zult, Jelle and Sanchez-Escalonilla, Santiago and Jiao, Ran and Franchi, Antonio and Stramigioli, Stefano},
  journal={IEEE transactions on robotics},
  volume={38},
  number={6},
  pages={3936--3955},
  year={2022},
  publisher={IEEE}
}

@article{kumar2023tracking,
  title={Tracking control design for fractional order systems: A passivity-based port-Hamiltonian framework},
  author={Kumar, Lalitesh and Dhillon, Sukhwinder Singh},
  journal={ISA transactions},
  volume={138},
  pages={1--9},
  year={2023},
  publisher={Elsevier}
}

@article{dirksz2011port,
  title={Port-Hamiltonian and power-based integral type control of a manipulator system},
  author={Dirksz, Daniel A and Scherpen, Jacquelien MA},
  journal={IFAC Proceedings Volumes},
  volume={44},
  number={1},
  pages={13450--13455},
  year={2011},
  publisher={Elsevier}
}

@article{van2014port,
  title={Port-Hamiltonian systems theory: An introductory overview},
  author={Van Der Schaft, Arjan and Jeltsema, Dimitri},
  journal={Foundations and Trends{\textregistered} in Systems and Control},
  volume={1},
  number={2-3},
  pages={173--378},
  year={2014},
  publisher={Emerald Publishing Limited Boston—Delft}
}

@article{huang2026switching,
  title={Switching model predictive control considering nonlinear dynamics of long hydraulic pipelines and counterbalance valves for trajectory tracking of large-scale hydraulic manipulators},
  author={Huang, Weidi and Wang, Yaoxing and Cheng, Min and Qi, Hongli and Yang, Shuwei and Jia, Ruiheng and Xu, Bing},
  journal={ISA transactions},
  year={2026},
  publisher={Elsevier}
}

@article{guo2026adaptive,
  title={Adaptive RISE Control of Hydraulic Manipulators Using Actor-Critic Architecture},
  author={Guo, Yujie and Ai, Chao and Chen, Junxiang and Kong, Xiangdong},
  journal={IEEE Transactions on Automation Science and Engineering},
  year={2026},
  publisher={IEEE}
}

@book{krstic1995nonlinear,
  title={Nonlinear and adaptive control design},
  author={Krstic, Miroslav and Kokotovic, Petar V and Kanellakopoulos, Ioannis},
  year={1995},
  publisher={John Wiley \& Sons, Inc.}
}

@article{slotine1987adaptive,
  title={On the adaptive control of robot manipulators},
  author={Slotine, Jean-Jacques E and Li, Weiping},
  journal={The international journal of robotics research},
  volume={6},
  number={3},
  pages={49--59},
  year={1987},
  publisher={Sage Publications Sage CA: Thousand Oaks, CA}
}

@book{bullo2005geometric,
  title={Geometric control of mechanical systems: modeling, analysis, and design for simple mechanical control systems},
  author={Bullo, Francesco and Lewis, Andrew D},
  volume={49},
  year={2005},
  publisher={Springer}
}

@article{bullo1999tracking,
  title={Tracking for fully actuated mechanical systems: a geometric framework},
  author={Bullo, Francesco and Murray, Richard M},
  journal={Automatica},
  volume={35},
  number={1},
  pages={17--34},
  year={1999},
  publisher={Elsevier}
}

@article{murray1997nonlinear,
  title={Nonlinear control of mechanical systems: A Lagrangian perspective},
  author={Murray, Richard M},
  journal={Annual Reviews in Control},
  volume={21},
  pages={31--42},
  year={1997},
  publisher={Elsevier}
}

@book{lewis1995aspects,
  title={Aspects of geometric mechanics and control of mechanical systems},
  author={Lewis, Andrew David},
  year={1995},
  publisher={California Institute of Technology}
}

@article{zhou2026semi,
  title={Semi-global exponential convergence nonlinear control of the quadrotor unmanned aerial vehicle based on Lie algebra},
  author={Zhou, Zhihao and Xian, Bin and Cai, Jiaming and Yu, Jian},
  journal={ISA transactions},
  year={2026},
  publisher={Elsevier}
}

@book{marsden1999introduction,
  title={Introduction to mechanics and symmetry: a basic exposition of classical mechanical systems},
  author={Marsden, Jerrold E and Ratiu, Tudor S and Golubitsky, M},
  volume={17},
  year={1999},
  publisher={Springer}
}

\end{document}